\newtheorem{theorem}{Theorem}[section]
\newtheorem{definition}[theorem]{Definition}
\newtheorem{lemma}[theorem]{Lemma}
\newtheorem{fact}[theorem]{Fact}
\newenvironment{proof-sketch}{\noindent{\bf Sketch of Proof}\hspace*{1em}}{\qed\bigskip}
\newenvironment{proof-idea}{\noindent{\bf Proof Idea}\hspace*{1em}}{\qed\bigskip}
\newenvironment{proof-of-lemma}[1]{\noindent{\bf Proof of Lemma #1}\hspace*{1em}}{\qed\bigskip}
\newenvironment{proof-attempt}{\noindent{\bf Proof Attempt}\hspace*{1em}}{\qed\bigskip}
\newcommand{\so}{\textsc{Steiner Orientation}\xspace}
\newcommand{\dmc}{\textsc{Directed Multicut}\xspace}
\newcommand{\mmax}{\textsc{Max}\xspace}
\newcommand{\eps}{\varepsilon}
\newcommand{\ex}{\mathbb{E}}
\newcommand{\Oh}{\mathcal{O}}
\newcommand{\og}{\tilde{G}}
\newcommand{\problembox}[4]{
  \begin{center}
    \fbox{
      \parbox{0.85\columnwidth}{
        #1\\[0.3em]
        \renewcommand{\tabcolsep}{3pt}
        \begin{tabular}{rp{0.7\columnwidth}}
          \textit{Input:\ } & #2\\
          \textit{Parameter:\ } & #3\\
          \textit{Task:\ } & #4
        \end{tabular}
      }
    }
  \end{center}
}
\def\DEBUG{true}
  \newcommand{\mic}[1]{\textcolor{red}{#1}}
  \newcommand{\rah}[1]{{\color{blue}{#1}}}
  \newcommand{\micr}[1]{\marginpar{\small \textcolor{red}{$\bullet$ #1}}}
  \newcommand{\rahr}[1]{\rem{\textcolor{blue}{$\bullet$ #1}}}
  \newcommand{\mic}[1]{}
  \newcommand{\rah}[1]{}
  \newcommand{\micr}[1]{}
  \newcommand{\rahr}[1]{}
\title{Parameterized inapproximability for Steiner Orientation by~Gap~Amplification}
\author{Micha{\l} W{\l}odarczyk\footnote{Address: \texttt{m.wlodarczyk@tue.nl}. The main part of the work has been done when the author was a Ph.D. student at the University of Warsaw and it was a part of the project TOTAL, that has received funding from the European Research Council (ERC) under the European Union's Horizon 2020 research and innovation programme (grant agreement No 677651).} \\ Eindhoven University of Technology}
\date{}
\begin{document}

\maketitle
\begin{abstract}
    In the $k$-\so problem, we are given a~mixed graph, that is, with both directed and undirected edges, and a~set of $k$ terminal pairs.
The goal is to find an~orientation of the~undirected edges that maximizes the number of terminal pairs for which there is a~path from the~source to~the~sink. 
The problem is known to be W[1]-hard when parameterized by $k$ and hard to approximate up to some constant for FPT algorithms assuming Gap-ETH.
On the other hand, no~approximation factor better than~$\Oh(k)$ is known.

We show that $k$-\so is unlikely to admit an~approximation algorithm with any constant factor, even within FPT running time. 
To obtain this result, we construct a~self-reduction via~a~hashing-based gap amplification technique, which turns out useful even outside of the FPT paradigm.
Precisely, we~rule~out any~approximation factor of~the form $(\log k)^{o(1)}$ for~FPT algorithms (assuming FPT $\ne$ W[1]) and $(\log n)^{o(1)}$ for~purely polynomial-time algorithms (assuming that the~class~W[1] does~not admit randomized FPT algorithms).
This constitutes a~novel inapproximability result for~polynomial-time algorithms obtained via~tools from the FPT theory.
Moreover, we~prove $k$-\so to~belong to~W[1], which entails W[1]-completeness of $(\log k)^{o(1)}$-approximation for~$k$-\so.
This provides an~example of a~natural approximation task that is complete in~a~parameterized complexity class.

Finally, we apply our technique to the maximization version of directed multicut -- \textsc{Max $(k,p)$-Directed Multicut} -- where we are given a~directed graph, $k$ terminals pairs, and a~budget $p$.
The goal is to maximize the number of separated terminal pairs by removing $p$ edges.
We present a~simple proof that the problem admits no FPT approximation with factor $\Oh(k^{\frac 1 2 - \eps})$ (assuming FPT $\ne$ W[1]) and no polynomial-time approximation with ratio $\Oh(|E(G)|^{\frac 1 2 - \eps})$ (assuming  NP $\not\subseteq$ co-RP).
\end{abstract}

\newpage

\section{Introduction}

In the recent years new research directions emerged in the intersection of the two theories aimed at tackling NP-hard problem:
parameterized complexity and approximation algorithms.
This led to numerous results combining techniques from both toolboxes.
The main goal in this area is to obtain an algorithm running in time $f(k)\cdot |I|^{\Oh(1)}$ for an~instance $I$ with parameter $k$, that finds a solution of value not worse than $\alpha$ (the approximation factor) times the value of the optimal solution.
They are particularly interesting for problems that are both W[1]-hard and at the same time cannot be well approximated in polynomial time~\cite{esa, bsn, kmedian, kcut, doct}.
On the other hand, some problems remain resistant to approximation even in this paradigm.


Obtaining polynomial-time approximation lower bounds under the assumption of P $\ne$ NP is challenging, because it usually requires
to prove NP-hardness of a~\emph{gap problem}.
This is a problem, where one only needs to distinguish instances with the value of optimal solution at least $C_1$ from those with this value at most $C_2$. 
This provides an~argument that one cannot obtain any approximation factor better than the gap, i.e., $\frac{C_1}{C_2}$, as long as P $\ne$ NP.

A~road to such lower bounds has been paved by the celebrated PCP theorem~\cite{pcp-arora}, which gives an alternative characterization of the class NP.
The original complicated proof has been simplified by Dinur~\cite{pcp} via the technique of \emph{gap amplification}:
an~iterated reduction from a gap problem with a small gap to one with a larger gap.
When the number of iterations depends on the input size, this allows us to start the chain of reductions from a problem with no constant gap.
However, this is only possible when we can guarantee that the size of all created instances does not grow super-polynomially.

The process of showing approximation lower bounds becomes easier with an additional assumption of the \emph{Unique Games Conjecture}~\cite{ugc}, which states
that a~particular gap version of the \textsc{Unique Games} problem is NP-hard.
This makes it possible to start a~reduction from a~problem with an~already relatively large gap.
The reductions based on {Unique Games Conjecture} provided numerous tight approximation lower bounds~\cite{ugc2, ugc3, ugc1}.

A~parameterized counterpart of the hardness assumption P $\ne$ NP is FPT $\ne$ W[1],
which is equivalent to the statement that \textsc{$k$-Clique} $\not\in$ FPT, that is, \textsc{$k$-Clique}\footnote{We attach the parameter to the problem name when we refer to a parameterized problem.} does not admit an~algorithm with running time of the form $f(k)\cdot |I|^{\Oh(1)}$.
Analogously to the classical complexity theory, proving hardness of an~approximate task relying only on FPT $\ne$ W[1] is difficult but possible.
A~recent result stating that the gap version of \textsc{$k$-Dominating Set} is W[1]-hard
(for the gap being any computable function $F(k)$) required gap amplification through a~distributed PCP theorem~\cite{dominating-set}.

Again, the task becomes easier when working with a~stronger hardness assumption: \emph{Gap Exponential Time Hypothesis}\footnote{Gap-ETH is a stronger version of the \emph{Exponential Time Hypothesis} (ETH), according to which one requires exponential time to solve \textsc{3-CNF-SAT} \cite{eth}.} (Gap-ETH) states that there exists $\eps > 0$ so that one requires exponential time to distinguish satisfiable \textsc{3-CNF-SAT} instances from those where only a~fraction of~$(1-\eps)$ clauses can be satisfied at once~\cite{gapeth2, gapeth-1}.
Gap-ETH is a~stronger assumption than FPT $\ne$ W[1], i.e., the latter implies the first, and it sometimes turns out more convenient since it already provides hardness for a problem with a gap. 
There are many recent examples of using Gap-ETH for showing hardness of parameterized  approximation~\cite{even-set, gapeth, bsn, planar-hardness, kmedian, doct}.

Our contribution is a novel gap amplification technique which exploits the fact that in a~parameterized reduction we can afford an~exponential blow-up with respect to the parameter.
It~circumvents the obstacles related to PCP protocols and, together with a~hashing-based technique, allows us to construct relatively simple self-reductions for problems on directed graphs.

\paragraph{Steiner Orientation}
In the $k$-\so problem we are given a mixed graph, that is, with both directed and undirected edges, and a set of $k$ terminal pairs.
The goal is to find an orientation of the undirected edges that maximizes the number of terminal pairs for which there is a path from the source to the sink. 

Some of the first studies on the $k$-\so problem (also referred to as the \textsc{Maximum Graph Orientation} problem) were motivated by modeling protein-protein interactions (PPI)~\cite{logn}
and protein-DNA interactions (PDI)~\cite{hal,logn-loglogn}.
Whereas PPIs interactions could be represented with undirected graphs, PDIs required introducing mixed graphs.
Arkin and Hassin~\cite{nphard} showed the problem to be NP-hard, but
polynomially solvable for $k=2$.
This result was generalized by Cygan et al.~\cite{xp}, who presented an $n^{\Oh(k)}$-time algorithm, which implied that the problem belongs to the class XP when parameterized by $k$ (cf. \cite{tapas} for different choices of parameterization).

The $k$-\so problem
has been proved to be W[1]-hard (see Section~\ref{sec:prelim}) by Pilipczuk and Wahlstr\"{o}m~\cite{pilipczuk-multicut}, which
makes it unlikely to be solvable in time \(f(k)\cdot |I|^{\Oh(1)}\).
The W[1]-hardness proof has been later strengthened to work on planar graphs and to give a~stronger running time lower bound based on 
ETH~\cite{planar-hardness}, which is essentially tight with respect to the $n^{\Oh(k)}$-time algorithm.

The approximation of \so has been mostly studied on~undirected graphs, where the problem reduces to optimization over trees by contracting \mbox{2-connected} components~\cite{xp}.
Medvedovsky et al.~\cite{logn} presented an $\Oh(\log n)$-approximation
and actually proved that this bound is achievable with respect to the total number of terminals $k$.
The approximation factor has been improved to $\Oh(\log n / \log\log n)$~\cite{logn-loglogn} and later to $\Oh(\log k / \log\log k)$~\cite{xp} by observing that one can compress an undirected instance to a tree of size $\Oh(k)$.
A lower bound of $\frac{12}{11}-\eps$ (based on P $\ne$ NP) has been obtained via a~reduction from \textsc{Max Directed Cut}~\cite{logn}.
Medvedovsky et  al.~\cite{logn} posed a question of tackling the maximization problem on mixed graphs, which was partially addressed by
Gamzu et al.~\cite{logn-loglogn} who provided a polylogarithmic approximation in the case where the number of undirected components on each source-sink path is bounded by a~constant.

The decision problem whether all the terminal pairs can be satisfied is polynomially solvable when restricting input graphs to be undirected
\cite{polytime}, which makes the maximization version fixed-parameter tractable, by simply enumerating all subsets of terminals.
The maximization version on mixed graphs is far less understood from the FPT perspective.
It is unlikely to be exactly solvable since the decision problem is W[1]-hard, but can we approximate it within a~reasonable factor?
The reduction by Chitnis et al.~\cite{planar-hardness} implies that, assuming Gap-ETH, $k$-\so cannot be approximated within factor $\frac {20} {19} - \eps$ on mixed graphs, in running time $f(k)\cdot n^{\Oh(1)}$.
Using new techniques introduced in this paper, we are able to provide stronger lower bounds based on a weaker assumption.

\paragraph{Related work}
\label{sec:recent}
Some examples of the new advancements in parameterized approximations are 1.81-approximation for \textsc{$k$-Cut}~\cite{kcut} (recently improved to $(\frac 5 3 + \eps)$~\cite{kcut-improved}) or $(1+\frac{2}{e}+\eps)$-approximation for \textsc{$k$-Median}~\cite{kmedian}, all running in time \(f(k)\cdot n^{\Oh(1)}\).
The first result beats the factor 2 that is believed to be optimal within polynomial running time and the latter one reaches the polynomial-time lower bound, which is a long standing open problem for polynomial algorithms.
For \textsc{Capacitated $k$-Median}, a constant factor FPT approximation has been obtained~\cite{esa}, whereas the best-known polynomial-time approximation factor is $\Oh(\log k)$.
Another example is an~FPT approximation scheme for the planar case of \textsc{Bidirected Steiner Network}~\cite{bsn}, which does not admit a~polynomial-time approximation scheme unless P = NP.

On the other hand several problems have proven resistant to such improvements.
Chalermsook et al.~\cite{gapeth} showed that under the assumption of Gap-ETH there can be no parameterized approximations with ratio $o(k)$ for \mbox{\textsc{$k$-Clique}} or~\textsc{$k$-Biclique} and none with ratio
$f(k)$ for \textsc{$k$-Dominating Set} (for any computable function $f$).
They have also ruled out $k^{o(1)}$-approximation for \textsc{Densest $k$-Subgraph}.
The cited FPT approximation for \textsc{$k$-Median} has a~tight approximation factor assuming Gap-ETH~\cite{kmedian}.

Subsequently, efforts have been undertaken to weaken the complexity assumptions on which the lower bounds are based.
For the \textsc{$k$-Dominating Set} problem
{Gap-ETH} has been replaced with a more established 
hardness assumption that FPT $\ne$ W[1]~\cite{dominating-set}.
Marx~\cite{marx-circuit} has proven parameterized inapproximability of \textsc{Monotone $k$-Circuit SAT} under the even weaker assumption that FPT $\ne$ W[P].
Lokshtanov et al.~\cite{doct} introduced the \emph{Parameterized Inapproximability Hypothesis} (PIH), that is weaker than Gap-ETH and stronger than FPT $\ne$ W[1],
and used it to rule out an~FPT approximation scheme for \textsc{Directed $k$-Odd Cycle Transversal}.
PIH turned out to be a~sufficient assumption to argue there can be no FPT algorithm for $k$-\textsc{Even Set}~\cite{even-set}.

\section{Overview of the results}
Our main inapproximability result is a W[1]-hardness proof for 
the gap version of $k$-\so with the gap $q = (\log k)^{o(1)}$.
This means that the problem is unlikely to admit~an algorithm with running time $f(k)\cdot |I|^{\Oh(1)}$.

\begin{restatable}{theorem}{mainso}
\label{thm:mainso}
Consider a function $\alpha(k) = (\log k)^{\beta(k)}$, where $\beta(k) \rightarrow 0$ is computable and non-increasing.
It is \emph{W[1]}-hard to distinguish whether for a~given instance of $k$-\so:
\begin{enumerate}
\itemsep0em
    \item there exists an orientation satisfying all $k$ terminal pairs, or
    \item for all orientations the number of satisfied pairs is at most $\frac{1}{\alpha(k)}\cdot k$.
\end{enumerate}
\end{restatable}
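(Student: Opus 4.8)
The plan is to combine the known exact W[1]-hardness of $k$-\so with an iterated, gap-amplifying self-reduction. As the seed I would take the W[1]-hardness of deciding whether all $k$ pairs can be satisfied (the $k$-\so hardness cited above): this already shows it is W[1]-hard to distinguish $\mathrm{OPT}=k$ from $\mathrm{OPT}\le(1-1/k)\cdot k$, i.e.\ a gap barely above $1$. The entire task is then to amplify this seed gap to an arbitrarily (slowly) growing factor while keeping the reduction an FPT reduction.

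\textbf{Gap amplification lemma (the heart of the argument).} I would prove: from a $k$-\so instance with promise ``$\mathrm{OPT}=k$'' versus ``$\mathrm{OPT}\le\gamma k$'', one can build---in polynomial time, using randomness---a $k'$-\so instance with promise ``$\mathrm{OPT}=k'$'' versus ``$\mathrm{OPT}\le\gamma' k'$'' in which the gap is substantially amplified (so that $1/\gamma'$ is essentially a fixed power of $1/\gamma$), with $k'$ and the instance size polynomially bounded. The construction leaves the undirected part essentially untouched---so the family of orientations, and hence the family of ``satisfiable subsets of old pairs'', does not grow---and adds, for each new terminal pair, a small directed \emph{bundle gadget}: a fresh source and sink wired by directed arcs through a chosen subcollection $B$ of the old terminals so that the new source reaches the new sink \emph{exactly when} every old pair indexed by $B$ is satisfied by the inherited orientation. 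Completeness is immediate: an orientation satisfying all $k$ old pairs satisfies every bundle. For soundness, a bundle over $B$ can be satisfied only if $B$ is contained in the satisfied set of old pairs, which has size at most $\gamma k$ for every orientation; choosing the index sets $B$ via a hashing / random construction, a Chernoff bound together with a union bound over the (comparatively few) possible satisfied sets forces \emph{every} orientation to satisfy at most a $\gamma'$-fraction of the bundles. Thus a polynomial-size collection of bundles certifies the amplified gap against all orientations---this is the hashing-based ingredient, and it is also what makes the reduction run in polynomial time (at the cost of randomness, which is where the ``randomized FPT'' hypothesis enters for the non-parameterized statement).

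\textbf{Iteration and calibration.} I would then apply the amplification lemma $t(k)$ times to the seed instance, tracking the recurrences for the gap, the parameter and the instance size. I would pick $t(k)$---computable from $\beta$, since $\beta\to0$ is computable and non-increasing---large enough that the final gap is at least $\alpha(k')=(\log k')^{\beta(k')}$ at the final parameter $k'$, while $k'$ stays a computable function of the seed parameter, so that the whole pipeline is an FPT reduction and hence witnesses W[1]-hardness. Since the seed gap is only about $1+1/k$, it takes on the order of $\log k$ rounds merely to reach a constant gap, and a bounded number of further rounds to overshoot any prescribed $(\log k')^{\beta(k')}$; because $\beta\to0$ this always terminates after finitely (though $k$-dependently) many rounds.

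\textbf{Main obstacle.} The crux is the bundle gadget: since all old terminal pairs share the \emph{same} pool of undirected edges, an orientation may create ``shortcut'' reachabilities between unrelated terminals, which could let the new source reach the new sink without all pairs of $B$ being satisfied and so destroy soundness. To control this I expect to need a structural invariant maintained along the whole chain of reductions---a layering or ordering of the terminals together with a guarantee that a source can reach a sink only along the intended routes---and to check that the amplification step preserves this invariant. A secondary, purely technical obstacle is the simultaneous bookkeeping: ensuring that the gap, the parameter and (for the polynomial-time variant) the instance size all stay within the required bounds at every round.
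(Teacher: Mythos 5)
There is a genuine gap, and it sits exactly where you flag your ``main obstacle'': the bundle gadget. You propose to keep the graph (and hence the pool of orientations) fixed and to wire, for each new pair, directed arcs through a subset $B$ of the \emph{old} terminals so that the new source reaches the new sink exactly when every pair in $B$ is satisfied. With all bundles living in the same copy of the graph this soundness claim is false: the chaining arcs $t_{b_i}\to s_{b_{i+1}}$ themselves create new reachabilities, a path may enter at $s_{b_1}$ and reach $t_{b_j}$ without ever realizing the intermediate pairs, and arcs installed for one bundle serve as shortcuts for another. The paper's construction is built precisely to kill this cross-talk: the composition is done through \emph{vertex-disjoint} copies, with each sampled tuple $R$ getting its own private fresh copy $G_R$ and the only inter-copy edges being the designated connecting arcs, so that a new pair can be satisfied only by satisfying one designated pair in each traversed copy. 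You acknowledge the problem but offer only the hope of ``a structural invariant''; no such invariant is exhibited, and without one the soundness step of your amplification lemma does not go through.

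This is not merely a technicality, because the repair changes the quantitative behaviour on which your calibration rests. Once the copies are disjoint, each copy is oriented independently, so there is no single ``satisfied set'' of old pairs to hash against; the analysis must range over per-copy configurations, and the only leverage a NO-instance gives per layer is that the fresh copy $G_R$ misses at least one of its $k$ pairs. This yields an \emph{additive} gain of order $\frac{1}{k}q^{-k}$ per layer (Lemma~\ref{lem:layer}), forcing $B=2kq^{k}$ layers and the parameter blow-up $k_0=2^{q^{\Oh(k)}}$ of Lemma~\ref{thm:so-gap} -- which is exactly why the paper's reduction is an FPT (not polynomial) reduction and why even the polynomial-time lower bound needs a parameterized hypothesis. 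Your claimed per-round power amplification with polynomially bounded size and parameter, and the resulting ``$\Oh(\log k)$ rounds to a constant gap'' bookkeeping, therefore rest on the unproven gadget and would not survive the fix; the kind of one-shot AND-composition you describe does work in this paper, but only for the cut problem (\mmax \dmc, Lemma~\ref{thm:dmc-gap}), where parallel composition through private source/sink arcs has no shortcut issue. A smaller point: in the paper the hypothetical iteration is internal to Lemma~\ref{thm:so-gap} (applied once with $q$ chosen as a function of $k$ so that $\alpha(k_0)\le q$), and the W[1]-hardness statement uses the FPT-time derandomization by enumeration, so randomness is only needed for the $(\log n)^{o(1)}$ polynomial-time variant -- your proposal has this last division of labour essentially right.
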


The previously known approximation lower bound for FPT algorithms, $\frac{20}{19}-\eps$, was obtained via a linear reduction from $k$-\textsc{Clique} and was based on Gap-ETH~\cite{planar-hardness}.
Our reduction not only raises the inapproximability bar significantly, but also weakens the hardness assumption (although we are not able to enforce the planarity of the produced instances, as in~\cite{planar-hardness}).
In fact, we begin with the decision version of $k$-\so and introduce a gap inside the self-reduction.
What is interesting, we rely on totally different properties of the problem than in the W[1]-hardness proof~\cite{pilipczuk-multicut}: that one required gadgets with long undirected paths and we introduce only new directed edges.


This result is also interesting from the perspective of the classical (non-parameterized) approximation theory.
The best approximation lower bound known so far has been $\frac{12}{11} - \eps$~\cite{logn},
valid also for undirected graphs.
Therefore we provide a new inapproximability result for polynomial algorithms, which is based on the assumption that FPT $\ne$ W[1].
Moreover, restricting to a~purely polynomial running time allows us to rule out approximation factor depending on $n$ (rather than on~$k$) with a~slightly stronger assumption, which is required because the reduction is randomized (see Section~\ref{sec:prelim} for the formal definition of~a~false-biased FPT algorithm).


\begin{restatable}{theorem}{polyso}
Consider a function $\alpha(n) = (\log n)^{\beta(n)}$, where $\beta(n) \rightarrow 0$ is computable and non-increasing.
Unless the class W[1] admits false-biased FPT algorithms, there is no polynomial-time algorithm that, given an instance of \so with $n$ vertices and $k$ terminal pairs, distinguishes between the following cases:
\begin{enumerate}
\itemsep0em
    \item there exists an orientation satisfying all $k$ terminal pairs, or
    \item for all orientations the number of satisfied pairs is at most $\frac{1}{\alpha(n)}\cdot k$.
\end{enumerate}
\end{restatable}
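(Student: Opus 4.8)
The plan is to re-open the construction behind the proof of Theorem~\ref{thm:mainso} rather than to invoke that theorem as a black box: its reduction may blow an instance up exponentially in the parameter, so the guaranteed gap $(\log k)^{o(1)}$ would be far weaker than the $(\log n)^{o(1)}$ we want here, and a black-box composition fails. Recall the shape of that reduction: it starts from the W[1]-hard decision version of $k$-\so -- deciding whether \emph{all} $k$ pairs can be satisfied by one orientation -- and amplifies the trivial soundness gap one round at a time, where a single round multiplies the numbers of vertices and of terminal pairs by a factor bounded by a function of the current gap only. For Theorem~\ref{thm:mainso} one runs enough rounds for the gap to surpass $(\log k)^{\beta(k)}$ for the final number of pairs $k$, accepting a blow-up exponential in the parameter -- harmless for an FPT reduction, fatal for a polynomial one.

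The new twist is to run the amplification only as long as it keeps the intermediate instance of size at most $N^{c_0}$, where $N$ is the size of the starting (decision) instance and $c_0$ is a suitable constant absorbing the unavoidable polynomial factors. By the per-round bound this stopping rule still lets the gap reach some value $g(N) = (\log N)^{\Omega(1)}$, in particular $g(N)\to\infty$; the resulting instance has $n'\le N^{c_0}$ vertices and some $k'$ terminal pairs, admits an orientation satisfying all $k'$ pairs when the starting instance is a \textsc{yes}-instance, and admits no orientation satisfying more than $k'/g(N)$ pairs when it is a \textsc{no}-instance. Since $\beta$ is non-increasing with $\beta(n')\to 0$ and $n'\le N^{c_0}$, the gap required by the theorem satisfies $\alpha(n') = (\log n')^{\beta(n')} \le (c_0\log N)^{\beta(N)} = (\log N)^{o(1)} = o(g(N))$, so $\alpha(n')\le g(N)$ for all sufficiently large $N$, the finitely many smaller instances being decided by brute force. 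Hence in the \textsc{no} case at most $k'/g(N)\le k'/\alpha(n')$ of the $k'$ pairs can be satisfied -- exactly case~2 of the theorem.

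Running the amplification and then feeding its output to the hypothesised polynomial-time distinguisher $\mathcal A$ for the gap $\alpha$ takes time $\mathrm{poly}(N)$ overall and decides the W[1]-hard decision version of \so. The only randomness lies in the hashing inside the self-reduction, and its error is one-sided: a \textsc{yes}-instance is always mapped to a fully orientable instance, while a \textsc{no}-instance is mapped to an instance of value at most $k'/g(N)$ with probability that standard amplification pushes to at least $1/2$. Thus the composition never rejects a \textsc{yes}-instance and rejects a \textsc{no}-instance with probability at least $1/2$ -- a false-biased polynomial-time algorithm for a W[1]-hard problem, hence, composing with the reduction that witnesses its hardness, a false-biased FPT algorithm for every problem in W[1], contrary to the assumption. (The randomness is genuinely needed in this regime: in the FPT setting of Theorem~\ref{thm:mainso} the number of terminal pairs is bounded by a function of the parameter, so one can enumerate all hash functions on the terminal set and derandomize, whereas here that number is polynomial in the input size and such enumeration is exponential-time.) The step I expect to be the real obstacle is the balancing just described: $g(N)$ must still exceed $\alpha(n')$ \emph{after} $n'$ has been inflated by the amplification rounds, yet the total blow-up must remain polynomial -- which works precisely because $\beta(n)\to 0$ forces $\alpha(n') = (\log n')^{o(1)}$ to be of strictly smaller order than $\log n'$, leaving a window for $g(N)$ to sit in.
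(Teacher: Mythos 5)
Your overall strategy (run the self-reduction of Lemma~\ref{thm:so-gap} so that the output stays polynomial in the input size, then feed it to the hypothetical distinguisher and conclude a false-biased FPT algorithm for the W[1]-hard decision version) is the right frame, and your remark about why randomness cannot be eliminated here matches the paper. But the quantitative heart of your argument is wrong, and the missing ingredient is exactly what the paper supplies. You assert that one amplification round inflates the instance by ``a factor bounded by a function of the current gap only'' and deduce that a polynomial size budget still yields a gap $g(N)=(\log N)^{\Omega(1)}$, uniformly in $k$. In the actual construction the per-round blow-up is $\Oh(k^4q^{2k})$ and the number of rounds needed to reach gap $q$ is $2kq^{k}$, so the total blow-up is $2^{q^{\Theta(k)}}$: both quantities depend \emph{exponentially} on the number $k$ of terminal pairs of the starting instance (this is forced by the sampler accuracy $\delta=\frac{1}{2k}q^{-k}$, which comes from the event that all $k$ random coordinates hit satisfied pairs). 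Consequently, within a budget of $N^{c_0}$ the achievable gap is only about $(\log N)^{\Theta(1/k)}$, not $(\log N)^{\Omega(1)}$. For instances where $k$ grows with $N$ (say $k=\Theta(\log N)$, with $\beta(n)=1/\log\log\log n$) this is bounded, while $\alpha(n')\to\infty$, so your balancing inequality $\alpha(n')\le g(N)$ fails; and these problematic instances exist at every input size, so they are not covered by ``finitely many smaller instances decided by brute force.''

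The paper closes precisely this hole using XP membership of $k$-\so via Fact~\ref{lem:xp}: to build a false-biased FPT algorithm it suffices to have a polynomial-time routine only on instances with $n\ge \beta^*((ck)^2)$, i.e.\ $\beta(n)\le (ck)^{-2}$, because the remaining instances (where $k$ is large relative to $n$) can be solved outright by the $n^{\Oh(k)}$ XP algorithm in FPT time. On the restricted instances one sets $q=(2\log n)^{\beta(n)}$ and uses $ck\,\beta(n)\le\sqrt{\beta(n)}$ to get $k_0\le 2^{(2\log n)^{ck\beta(n)}}\le 2^{(2\log n)^{\sqrt{\beta(n)}}}=n^{o(1)}$, so the amplified instance has size $n^{1+o(1)}$ and gap $q\ge\alpha(|V(H)|)$ -- the inequality you wanted, but now legitimately, because the exponential-in-$k$ blow-up has been tamed by the promise that $\beta(n)$ is tiny compared to $1/k$. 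A secondary point: the construction's success probability on a NO-instance is only about $2^{-B}\ge 1/k_0$ (one coin per round), so the ``standard amplification'' you invoke must consist of $k_0=n^{o(1)}$ independent repetitions of the whole reduction; this is benign, but it deserves the explicit bound rather than a wave at boosting to $1/2$.
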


A~similar phenomenon, that is, novel polynomial-time hardness based on an~assumption from parameterized complexity, has appeared in~the~work on \textsc{Monotone $k$-Circuit SAT}~\cite{marx-circuit}.
Another example of this kind is  polynomial-time approximation hardness for \textsc{Densest $k$-Subgraph} based on ETH~\cite{eth-dks}.

\paragraph{W[1]-completeness}
So far, the decision version of $k$-\so  has only been known to be W[1]-hard~\cite{pilipczuk-multicut} and to belong to XP~\cite{xp}.
We establish its exact location in the W-hierarchy.
A~crucial new insight is that we can assume the solution to be composed of~$f(k)$ pieces, for which we only need to check if they match each other, and this task reduces to $k$-\textsc{Clique}.

\begin{restatable}{theorem}{mainhard}
$k$-\so is \emph{W[1]}-complete.
\end{restatable}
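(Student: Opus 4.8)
The hardness direction is already known: Pilipczuk and Wahlström~\cite{pilipczuk-multicut} proved $k$-\so to be W[1]-hard. So the plan is to establish membership in W[1], i.e. to give a parameterized reduction from $k$-\so to $k$-\textsc{Clique} (or, equivalently, to exhibit a W[1]-machine deciding it). The guiding intuition, flagged in the overview, is that although an orientation assigns a direction to each of possibly $\Theta(n)$ undirected edges, what actually matters for satisfying the terminal pairs can be summarized by $f(k)$ "pieces" of local information, and verifying that these pieces are mutually consistent is itself a clique-type problem. Concretely, I would first invoke the XP structure from Cygan et al.~\cite{xp}: on a fixed terminal pair $(s_i,t_i)$, whether an orientation makes $t_i$ reachable from $s_i$ depends only on the orientation restricted to the undirected 2-connected components traversed by some $s_i$--$t_i$ walk, and crucially the relevant "state" of each such component is captured by a bounded amount of data (which boundary vertices are mutually reachable after orienting that component). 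The number of 2-connected components that can lie on \emph{some} source-sink path is not bounded by $k$ in general, so the real work is to argue we may preprocess/shrink the instance so that it is.

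The first step is a reduction of the instance to a canonical form. Contract every undirected 2-connected component and, more generally, perform the same compression used to obtain the $\Oh(k)$-size tree in the undirected approximation results~\cite{xp,logn-loglogn}, adapted to mixed graphs: vertices that are irrelevant to every terminal pair (not on any $s_i$--$t_i$ path for any choice of orientation) are deleted, and maximal "rigid" directed substructures are collapsed. After this we should be able to show the graph of "choice gadgets" — maximal undirected pieces whose orientation is not forced and which affect at least one terminal — has size bounded by a function of $k$; the key combinatorial fact is that each such gadget, to be relevant, must carry one of the $k$ source-sink demands through it, and a counting/pruning argument (a gadget relevant to demand $i$ but whose removal does not change the answer for $i$ can be short-circuited) caps their number by $g(k)$.

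Once the instance has $g(k)$ relevant choice gadgets, a W[1]-machine proceeds as follows: existentially guess, for each of the $g(k)$ gadgets, one of its (constantly many, or $f(k)$-many) possible "interface states" — i.e. the reachability pattern among its boundary vertices induced by orienting its internal undirected edges; this is $g(k)\cdot\log f(k)$ bits, permissible for a W[1]-machine. Then the verification step — does the chosen collection of interface states yield, for every $i\in[k]$, an $s_i$--$t_i$ path in the contracted "interface graph"? — is a reachability question on a graph of size bounded by a function of $k$, hence checkable in FPT time; and the consistency constraints between overlapping gadgets are pairwise, so they can be phrased as a clique in an auxiliary graph whose vertices are (gadget, interface-state) pairs with edges between compatible choices, which is exactly the form needed to land in W[1] via the standard equivalence with $k$-\textsc{Clique}. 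Assembling these, a guessed orientation of the gadgets that survives verification is extended to a full orientation of the original graph (the contracted/forced parts being oriented canonically), proving membership.

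The main obstacle, I expect, is the bound on the number of relevant choice gadgets: a priori an orientation of a single undirected component can be responsible for routing many different demands, and different demands can interact through shared components, so one must be careful that the pruning argument does not blow up the parameter and that "interface state" is genuinely a bounded-size object on mixed graphs (the subtlety being undirected edges whose orientation is used by one demand in one direction and potentially constrains another demand). The cleanest route is probably to mirror the XP algorithm of~\cite{xp} step by step and observe that its dynamic-programming "table entries" for a branch decomposition of width $\Oh(k)$ are exactly the $f(k)$-sized pieces we need, so that membership in W[1] falls out of re-interpreting that algorithm's nondeterministic content; verifying that this re-interpretation stays within the W[1] machine model (bounded existential guessing followed by an FPT check that is, in the appropriate technical sense, expressible as a weighted-circuit/clique instance) is the part that requires the most care.
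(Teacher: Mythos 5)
The hardness half is fine and matches the paper (citing Pilipczuk--Wahlstr\"om), but your membership argument has a genuine gap at its core. Your plan rests on the claim that after preprocessing, the number of ``relevant choice gadgets'' (maximal undirected pieces whose orientation matters) is bounded by some $g(k)$, and you offer only a sketchy pruning/counting argument for this, which you yourself flag as ``the real work.'' This claim is not established and is very unlikely to be provable: on mixed graphs the number of undirected components that a single $s_i$--$t_i$ path must traverse is \emph{not} bounded by any function of $k$ (the W[1]-hardness construction of Pilipczuk and Wahlstr\"om relies precisely on long chains of undirected pieces, and the known approximation results that assume boundedly many undirected components per source--sink path are explicitly restricted cases). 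Worse, your parenthetical assumption that each gadget has only constantly many or $f(k)$-many interface states would make the whole problem FPT by brute-forcing the $f(k)^{g(k)}$ state combinations, contradicting W[1]-hardness (unless FPT $=$ W[1]); so either the gadget count or the state count must be unbounded in $k$, and your guess-and-check scheme does not survive. The fallback of ``re-interpreting'' the $n^{\Oh(k)}$ XP algorithm of Cygan et al.\ as a W[1] machine also does not go through: an XP algorithm's nondeterministic content is not of the form $f(k)\log n$ guess bits followed by an FPT check, which is what a W[1] membership proof needs.

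The paper's actual route bounds a quantity attached to the \emph{solution}, not to the \emph{instance}: it fixes a canonical path family (shortest paths with lexicographic tie-breaking, closed under taking subpaths) and proves the key structural lemma that any satisfiable acyclic instance has a solution in which every terminal path is a concatenation of at most $k^{\Oh(k)}$ canonical paths. That lemma is the missing idea in your proposal, and its proof is a nontrivial exchange argument: take a minimum-size support of a solution, pigeonhole support vertices by the set of terminals passing through them, classify the canonical bypass segments by their conflict set and ``schedule'' of intersections with other paths, and build a detour (using acyclicity) that removes more support vertices than it adds, contradicting minimality. With that lemma, membership reduces to $k\beta_k$-\textsc{Clique} in the natural way you anticipate -- vertices are (terminal index, piece index, canonical path) triples, edges encode pairwise non-conflict and endpoint matching -- but without it (or some substitute bounding the number of pieces per path), your construction does not yield membership in W[1].
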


We hereby solve an open problem posted by Chitnis et al.~\cite{planar-hardness}.
What is more, this implies that $(\log k)^{o(1)}$-\textsc{Gap} $k$-\so belongs to W[1] (see Section~\ref{sec:prelim} for formal definitions).
Together with Theorem~\ref{thm:mainso} this entails W[1]-completeness.
Another gap problem with this property is \textsc{Maximum $k$-Subset Intersection}, introduced for the purpose of proving W[1]-hardness of $k$-\textsc{Biclique}~\cite{biclique}.
We are not aware of~any other natural gap problem being complete in a~parameterized complexity class.

\paragraph{Directed Multicut}
As another application of our technique, we present a~simple hardness result for the gap version of \textsc{Max $(k,p)$-Directed Multicut} with the gap $q=k^{\frac 1 2 - \eps}$.
We show that even if we parameterize the problem with both the number of terminal pairs $k$ and the size of the cutset $p$, then we essentially cannot obtain any approximation ratio better than $\sqrt{k}$.

\begin{restatable}{theorem}{maindmc}
\label{thm:dmc-w1}
For any $\eps > 0$ and function $\alpha(k) = \Oh\left(k^{\frac 1 2 - \eps}\right)$,
it is \emph{W[1]}-hard to distinguish whether for a~given instance of~\mmax $(k,p)$-\dmc:
\begin{enumerate}
\itemsep0em
    \item there is a cut of size $p$ that separates all $k$ terminal pairs, or
    \item all cuts of size $p$ separate at most $\frac{1}{\alpha(k)}\cdot k$ terminal pairs.
\end{enumerate}
\end{restatable}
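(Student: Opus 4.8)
The plan is to introduce the whole $\sqrt{k}$ gap inside a \emph{single, elementary self-reduction} whose input is a W[1]-hard \dmc instance carrying \emph{no} gap at all, so that the gap appears purely from a counting argument. The one new primitive I would use is a \emph{conjunction gadget}: from a \dmc instance I create a batch of new terminal pairs, each of which asks to separate an entire sub-family of the old pairs at once, realized by routing the new source into the old sources and the old sinks into the new sink \emph{in parallel}. A cut then separates such a new pair exactly when it separates every old pair in the corresponding sub-family. If I index the new pairs by all $\binom{k_0}{2}$ complements of $2$-element subsets of the $k_0$ old pairs, then a cut that leaves even a single old pair connected separates at most $k_0-1$ of the new pairs, whereas a cut separating all old pairs separates all $\binom{k_0}{2}$ of them --- a gap of order $k_0=\Theta(\sqrt{k})$, with $k=\binom{k_0}{2}$ the new number of pairs and the budget unchanged.

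Concretely, I would first fix the base instance. The decision version of \dmc --- is there a cut of size $p$ separating all $k$ pairs? --- is W[1]-hard parameterized by $k+p$ simultaneously, e.g.\ by the standard reduction from \textsc{Multicolored $k$-Clique} using $\Oh(k)$ selection gadgets (one unit of budget choosing a vertex in each colour class) and $\binom{k}{2}$ edge-verification pairs, so that all pairs are separable with budget $\Oh(k)$ iff the chosen vertices form a clique; alternatively one may take the gap version of $k$-\so from Theorem~\ref{thm:mainso} and pass to \dmc, which only gives a quantitatively stronger no-case. Either way we get a \dmc instance $(G_0,\{(s_c,t_c)\}_{c\in[k_0]},p_0)$ with $k_0,p_0$ bounded by functions of the original parameter, a yes-case in which some cut of size $p_0$ separates all $k_0$ pairs, and a no-case in which no cut of size $p_0$ does.

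Now the reduction. Keeping $G_0$ and the budget $p_0$ untouched, for every $2$-element $B\subseteq[k_0]$ I add a fresh source $\hat s_B$ with an arc to each $s_c$, $c\notin B$, a fresh sink $\hat t_B$ with an arc from each $t_c$, $c\notin B$, and declare $(\hat s_B,\hat t_B)$ a terminal pair; the new arcs are uncuttable, so the cuts of the new instance are exactly the cuts of $G_0$. Fix a cut and let $\bar S\subseteq[k_0]$ be the set of old pairs it leaves connected; then $(\hat s_B,\hat t_B)$ is separated precisely when $\bar S\subseteq B$. In the yes-case $\bar S=\varnothing$, so all $k:=\binom{k_0}{2}$ new pairs are separated; in the no-case $\bar S\neq\varnothing$, so $\bar S\subseteq B$ forces $|\bar S|\le 2$ and pins $B$ down to at most $k_0-1$ choices, hence at most $k_0-1=\Oh(\sqrt{k})$ new pairs are separated. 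Since $k$ and $p:=p_0$ stay bounded by computable functions of the original parameter and the construction is polynomial, this is a parameterized reduction from a W[1]-hard problem, and it proves the theorem: for every $\eps>0$ and $\alpha(k)=\Oh(k^{1/2-\eps})$ we have $k_0-1\le k/\alpha(k)$ once $k$ is large, so a yes-instance lands in case~1 and a no-instance in case~2.

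The part that I expect to be doing the real work, rather than bookkeeping, is making the conjunction gadget behave exactly as described. Parallel routing into $\{s_c,t_c\}_{c\notin B}$ literally tests whether the cut separates the \emph{set} $\{s_c:c\notin B\}$ from the \emph{set} $\{t_c:c\notin B\}$, and this coincides with ``every old pair outside $B$ is separated'' only when $G_0$ has no spurious route from $s_a$ to $t_b$ for $a\neq b$. So the crux is to put the base instance into the normal form ``$s_a$ cannot reach $t_b$ for $a\neq b$'' while preserving its W[1]-hardness --- routinely arrangeable because we construct (or obtain from Theorem~\ref{thm:mainso}) the base instance ourselves, but it is exactly where care is needed and, I expect, what confines the argument to a $\sqrt{k}$-factor rather than a larger one. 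The same gadget, applied to a base instance whose size is polynomial in the input and constructed by a randomized step, drives the claimed polynomial-time statement with gap $\Oh(|E(G)|^{1/2-\eps})$.
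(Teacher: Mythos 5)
Your conjunction gadget does not actually test what you need it to test, and the place where you wave this away is exactly where the argument breaks. Attaching $\hat s_B$ to all sources $s_c$, $c\notin B$, and $\hat t_B$ to all sinks $t_c$, $c\notin B$, makes $(\hat s_B,\hat t_B)$ separated iff the cut separates the \emph{set} $\{s_c\}$ from the \emph{set} $\{t_c\}$, which is strictly stronger than separating each pair individually whenever some $s_a\leadsto t_b$ route with $a\ne b$ survives; in particular completeness already fails, since a size-$p_0$ cut witnessing the YES case of the base instance need not kill cross routes. You acknowledge this and claim the normal form ``$s_a$ cannot reach $t_b$ for $a\ne b$'' is routinely arrangeable, but you never arrange it. The only W[1]-hardness for the decision version that the paper (or the literature it cites) provides is the Pilipczuk--Wahlstr\"om result with $4$ terminal pairs parameterized by $p$, which gives you $k_0=4$ and hence only a constant gap; your alternative base, a reduction from \textsc{Multicolored $k$-Clique} with budget $\Oh(k)$ and $\binom{k}{2}$ verification pairs satisfying the no-cross-reachability normal form, is not a known ``standard'' construction and is itself a substantial hardness proof that the proposal only gestures at (note that naive fixes such as taking disjoint copies per pair to kill cross reachability destroy the hardness, since the instance then decomposes into independent min-cuts). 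So the load-bearing step of your proof is an unproven claim, and it is not a routine one. (The uncuttable-arc issue, by contrast, is minor and fixable by multiplicities/subdivision.)

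The paper's proof is designed precisely to avoid needing any structural assumption on the base instance. It starts from the $4$-terminal W[1]-hard decision instance, takes $M=\Theta(p\log q)$ disjoint copies, and creates each new pair $(s_R,t_R)$ from a tuple $R\in[4]^M$ by attaching $s_R$ to the source of pair $r_i$ and $t_R$ to the sink of pair $r_i$ in copy $i$, for every $i$. Because each new pair designates exactly \emph{one} source--sink index per copy, the only reachability question ever asked inside a copy is same-index reachability, so cross routes $s_a\leadsto t_b$ are harmless. The budget forces at least $3\log q$ copies to retain some connected pair, so a random new pair is separated with probability at most $(3/4)^{3\log q}\le \frac{1}{2q}$, and the Hoeffding-based sampler family (Lemma~\ref{lem:hash}) shows that $k_0=\Theta(pq^2\log q)$ sampled tuples control all $16^M$ configurations simultaneously; setting $q=p^L$ yields the $k^{1/2-\eps}$ gap, at the price of a randomized (FPT-derandomizable) reduction. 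If you could actually establish a W[1]-hard base with your normal form and growing $k_0$, your deterministic gadget would be a genuinely nicer route; as written, the proof has a gap at its central step.
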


When restricted to polynomial running time, the lower bound of $\Omega(k^{\frac 1 2 - \eps})$ can be improved to $\Omega(|E(G)|^{\frac 1 2 - \eps})$, however unlike the case of $k$-\so, this time the reduction is~polynomial and we need to assume only 
 NP $\not\subseteq$ co-RP\footnote{A problem is in co-RP if it admits a polynomial-time false-biased  algorithm, i.e., an~algorithm which is always correct for YES-instances and for NO-instances returns the correct answer with probability greater than some constant.}.
 
\begin{restatable}{theorem}{polydmc}
Assuming \emph{NP} $\not\subseteq$ \emph{co-RP}, for any $\eps > 0$ and function $\alpha(m) = \Oh\left(m^{\frac 1 2 - \eps}\right)$, there is no polynomial-time algorithm that, given an instance $(G,\mathcal{T},p),\, |\mathcal{T}|=k,\, |E(G)|=m$, of \mmax \dmc, distinguishes between the following cases:
\begin{enumerate}
\itemsep0em
    \item there is a cut of size $p$ that separates all $k$ terminal pairs, or
    \item all cuts of size $p$ separate at most $\frac{1}{\alpha(m)}\cdot k$ terminal pairs.
\end{enumerate}
\end{restatable}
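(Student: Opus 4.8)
The plan is to obtain this statement as the polynomial-time counterpart of Theorem~\ref{thm:dmc-w1}, the only changes being the problem we start from and the hardness assumption. The reduction behind Theorem~\ref{thm:dmc-w1} is, at its core, a self-reduction: it transforms an instance of a suitable NP-hard problem (for instance, the decision version of \dmc itself) into a \mmax \dmc instance $(G,\mathcal T,p)$ with $|\mathcal T|=k$, using a hashing step to introduce the gap. Reading that construction as a polynomial-time reduction from the classically NP-hard source, I would first record the two properties it provides: \emph{completeness} --- a YES-instance is mapped, for every outcome of the internal randomness, to an instance admitting a size-$p$ cut separating all $k$ pairs; and \emph{soundness} --- a NO-instance is mapped, with probability at least $2/3$ over the randomness, to an instance in which no size-$p$ cut separates more than $\tfrac{1}{\Theta(k^{1/2-\eps})}\cdot k$ pairs. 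The orientation of the error here --- failure is possible only on the soundness side --- is exactly what a co-RP argument needs, and it is inherited from the fact that the random choices affect only which terminal pairs are created, never destroying a good cut.

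Next I would track the size of the produced instance and convert the gap from $k$ to $m=|E(G)|$. The point is that the construction is essentially linear in the number of terminal pairs: the hashing keeps $k$ (and hence the number of selection gadgets and auxiliary vertices) small enough that $m=k^{1+o(1)}$. Therefore $\Theta(k^{1/2-\eps}) \ge m^{1/2-\eps'}$ for any fixed $\eps'>\eps$ once the instance is large, so in the NO case every size-$p$ cut separates at most $\tfrac{1}{\alpha(m)}\cdot k$ pairs for the target $\alpha(m)=\Oh(m^{1/2-\eps})$. The passage from the fixed exponent $\tfrac12-\eps$ to an arbitrary function $\Oh(m^{1/2-\eps})$ below it is routine padding, as in the treatment of the non-increasing $\beta$ in Theorem~\ref{thm:mainso}.

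Finally I would close the argument by contradiction. Given a polynomial-time algorithm that distinguishes the two cases of the gap problem, composing it with the randomized reduction yields a polynomial-time algorithm for the NP-hard source that answers YES with certainty on YES-instances and answers NO with probability at least $2/3$ (amplifiable to $1-2^{-n}$ by taking independent repetitions and reporting NO as soon as one run does) on NO-instances, i.e., a false-biased algorithm in the sense of the statement. Since the source problem is NP-hard and co-RP is closed under polynomial-time many-one reductions, this would place all of NP in co-RP, contradicting the hypothesis.

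The step I expect to be the main obstacle is the size bound $m=k^{1+o(1)}$: it is precisely this economy --- no gadget or bundle of edges spent per terminal pair beyond a near-constant amount, and all redundancy collapsed by the hashing --- that allows the $k^{1/2-\eps}$ gap of Theorem~\ref{thm:dmc-w1} to survive being re-expressed in terms of $|E(G)|$ rather than degrading to a much weaker $m^{\Theta(\eps)}$-type bound. A lesser point to verify is that boosting the reduction's success probability preserves one-sidedness, so that the composed algorithm is genuinely co-RP and not merely a two-sided bounded-error procedure.
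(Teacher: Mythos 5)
Your overall strategy is the same as the paper's: feed an NP-hard exact instance through the randomized gap self-reduction (Lemma~\ref{thm:dmc-gap}), observe that the error is one-sided (completeness always holds, soundness fails only with bounded probability), and conclude that a polynomial-time gap distinguisher would place an NP-hard problem, and hence NP, inside co-RP. That part is fine, including the remark that one-sidedness survives amplification.

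The genuine gap is exactly the step you flag as the "main obstacle" and then assume: that the reduction of Theorem~\ref{thm:dmc-w1}, taken as is, produces instances with $m_0=|E(H)|=k_0^{1+o(1)}$. With the parameter choice used there, $q=p^L$, this is false in general: Lemma~\ref{thm:dmc-gap} gives $k_0=\Theta(p\,q^2\log q)=p^{\Theta(L)}$ but $|E(H)| \ge |E(G)|\cdot p_0$, and the source edge count $|E(G)|=m$ can be arbitrarily larger than any polynomial in $p$ (in the NP-hard instances $p$ need not be polynomially related to $m$). So the $k_0^{1/2-\eps}$ gap can degrade to essentially nothing when re-expressed in terms of $m_0$, and no padding argument rescues it. The paper's proof resolves this by re-instantiating Lemma~\ref{thm:dmc-gap} with the gap parameter tied to the \emph{instance size} rather than the parameter: $q=m^L$ with $L$ a constant chosen so that $2\eps L\ge 5$. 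Then $k_0=\Omega(m^{2L})$ dominates the construction, $m_0=m\cdot p_0+\Oh(k_0p_0)=\Oh(m^{2L+5})$, and $k_0/q=\Oh(m^{L+2})$, while $k_0/\alpha(m_0)=\Omega\bigl(m^{2L-(2L+5)(\frac12-\eps)}\bigr)=\Omega(m^{L-\frac52+2\eps L})$; the choice of $L$ makes the latter exceed the former for large $m$, so the exponent loss $\frac{2L+5}{2L}$ is absorbed by the $\eps$ slack. The reduction stays polynomial because $L$ is constant for fixed $\eps$. Without this re-tuning of $q$ (and the accompanying bookkeeping), your argument does not go through; with it, the rest of your outline matches the paper.
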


As far as we know, the approximation status of this variant has not been studied yet. 
If~we want to minimize the number of removed edges to separate all terminal pairs
or minimize the ratio of the cutset size to the number of separated terminal pairs (this problem is known as \textsc{Directed Sparsest Multicut}), those cases admit a polynomial-time $\tilde{\Oh}(n^{\frac{11}{23}})$-approximation algorithm~\cite{multicut-ub} and a lower bound of $2^{\Omega(\log^{1-\eps} n)}$~\cite{multicut-lb}.
Since $\frac{11}{23} < \frac{1}{2}$ and $n \le m$, the maximization variant with a~hard constraint on the cutset size turns out to be harder.

In the undirected case \textsc{$p$-Multicut} is FPT, even when parameterized only by the size of~the~cutset $p$ and allowing arbitrarily many terminals~\cite{multicut-fpt}.
This is in contrast with the directed case, which becomes W[1]-hard already for 4 terminals, when parameterized by $p$.
It is worth mentioning that $k$-\so and $p$-\dmc were proven to be W[1]-hard with a~similar gadgeting machinery~\cite{pilipczuk-multicut}.

\paragraph{Organization of the paper}
We~begin with the necessary definitions in Section~\ref{sec:prelim}.
As our gap amplification technique is arguably the most innovative ingredient of the paper, we precede the proofs with informal Section~\ref{sec:technique}, which introduces the ideas gradually.
It is followed by the detailed constructions for $k$-\so  in Section \ref{sec:so}
and for \mmax $(k,p)$-\dmc in~Section \ref{sec:dmc}.
Each contains a~self-reduction lemma and applications to polynomial and FPT running time.
Both self-reductions are based on a~probabilistic argument described in Section~\ref{sec:correct}.
Finally, in Section \ref{sec:w1} we prove W[1]-completeness of $k$-\so.
That part of the paper is self-contained and can be read independently of the previous sections.

\section{Preliminaries}
\label{sec:prelim}
\paragraph*{Fixed parameter tractability}

A parameterized problem instance is
    created by associating an integer parameter \(k\) with an input instance.
    Formally, a parameterized language is a subset of $\Sigma^* \times \mathbb{N}$.
    We
    say that a language (or a problem) is \emph{fixed parameter tractable} (FPT{}) if it admits an algorithm solving an
    instance \((I, k)\) (i.e., deciding if it belongs to the language) in running time
    \(f(k)\cdot |I|^{\Oh(1)}\), where \(f\) is a computable function.
    Such a~procedure is called an \emph{FPT algorithm} and we say concisely that it runs in \emph{FPT time}.
    A language belongs to the broader class XP if it admits an algorithm with running time of the form
    \( |I|^{f(k)}\).
    
    There is no widely recognized class describing problems which admit randomized FPT algorithms.
    Instead of defining such a~class, we will directly use
    the notion of a~\emph{false-biased} algorithm, which is always correct for YES-instances and for NO-instances returns the correct answer with probability greater than some constant (equivalently, when the~algorithm returns \emph{false} then it is always correct).
    Similarly, a~\emph{true-biased} algorithm is always correct for NO-instances but may be wrong for YES-instances with bounded probability.
    A false-biased (resp. true-biased) FPT algorithm satisfies the condition above and runs in FPT time.

    To argue that a
    problem is unlikely to be FPT{}, we use parameterized reductions analogous
    to those employed in the classical complexity theory. Here, the concept of
    {W}-hardness replaces NP-hardness, and we need not only to
    construct an equivalent instance in FPT{} time, but also ensure that the parameter in the new instance depends only on the
    parameter in~the original instance.
If there exists a parameterized reduction from a W[1]-hard problem (e.g., $k$-\textsc{Clique}) to another problem \(\Pi \), then the problem \(\Pi \) is
    {W[1]}-hard as well. This provides an argument that \(\Pi \) does not admit an algorithm with
    running time \(f(k)\cdot |I|^{\Oh(1)}\)
    under the assumption that FPT $\ne$ W[1].

\paragraph*{Approximation algorithms and gap problems}
We define an optimization problem (resp. parameterized optimization problem) as a task of optimizing function $L \rightarrow \mathbb{N}$, where $L \subseteq \Sigma^*$ (resp. $L \subseteq \Sigma^* \times \mathbb{N}$), representing the value of the optimal solution.
An $\alpha$-approximation algorithm for a~maximization task must return a solution of value no less than the optimum divided by $\alpha$ (we follow the convention that $\alpha > 1$).
The approximation factor $\alpha$ can be a~constant or it can depend on the input size.
In the most common setting, the running time is required to be polynomial.
An~FPT approximation algorithm works with a parameterized optimization problem and is required to run in FPT time.
It is common that its approximation coefficient can depend on the parameter.

A gap problem (resp. parameterized gap problem) is given by two disjoint languages $L_1, L_2 \subseteq \Sigma^*$ (resp. $L_1, L_2 \subseteq \Sigma^* \times \mathbb{N}$).
An algorithm should decide whether the input belongs to $L_1$ or to $L_2$.
If neither holds, then the algorithm is allowed to return anything.
Usually $L_1, L_2$ are defined respectively as the sets of instances (of an optimization problem) with a~solution of value at least $C_1$ and instances with no solution with value greater than $C_2$.
An $\alpha$-approximation algorithm with $\alpha < \frac{C_1}{C_2}$ can distinguish $L_1$ from $L_2$, therefore hardness of an approximation task is implied by hardness for the~related gap problem.

\paragraph{Problem definitions}
We now formally describe the problems we work with.
Since we consider parameterized algorithms it is important to specify how we define the parameter of an instance.

A mixed graph is a triple $(V, A, E)$, where $V$ is the vertex set, $A$ is the set of directed edges, and $E$ stands for the set of undirected edges.
An~orientation of a~mixed graph is given by replacing each undirected edge $uv \in E$ with one of the directed ones: $(u,v)$ or $(v,u)$.
This creates a~directed graph $\tilde{G} = (V, A \cup \tilde{E})$, where $\tilde{E}$ is the set of newly created directed edges.
We assume that $uv \notin E$ for each $(u,v) \in A$, so $\og$ is always a simple graph.

\problembox{$k$-\so}{
mixed graph $G = (V, A, E)$, list of terminal pairs $\mathcal{T} = ((s_1,t_1),\dots,(s_k,t_k))$, }{
$k$}{
find an orientation $\og$ of $G$ that maximizes the number of pairs $(s_i,t_i)$, such that $t_i$ is reachable from $s_i$ in $\tilde{G}$
}

We add ,,\textsc{Max}'' to the name of the following problem in order to distinguish it from the more common version of \dmc, where one minimizes the number of edges in the cut.

\problembox{\textsc{Max $(k,p)$-Directed Multicut}}{
directed graph $G = (V, A)$, list of terminal pairs $\mathcal{T} = ((s_1,t_1),\dots,(s_k,t_k))$, 
integer $p$}{
$k+p$}{
find a subset of edges $A' \subseteq A$, $|A'| \le p$, in order to maximize the number of pairs $(s_i,t_i)$, such that
$t_i$ is unreachable from $s_i$ in $G \setminus A'$}


If a solution to either problem satisfies the condition for a particular terminal pair, we say that this pair is satisfied by this solution.
The decision versions of both problems ask whether there is a solution of value $k$, that is, satisfying all the terminal pairs.
We call such an instance fully satisfiable, or a YES-instance, and a NO-instance otherwise.
For the sake of proving approximation hardness we introduce the gap versions: $q$-\textsc{Gap} $k$-\so and $q$-\textsc{Gap} \textsc{Max $(k,p)$-Directed Multicut}, where we are promised that the value of the optimal solution is either $k$ or at most $\frac{k}{q}$, and we have to distinguish between these cases.

When referring to non-parameterized problems, we drop the parameters in the problem name.
We use notation $[n] = \{1,2,\dots, n\}$. All logarithms are 2-based.

\section{The gap amplification technique}
\label{sec:technique}

We begin with an informal thought experiment that helps to understand the main ideas behind the reduction.
For an instance $(G, \mathcal{T} = ((s_1,t_1),\dots,(s_k,t_k)))$ of $k$-\so
we refer to~the~vertices $s_i, t_i \in V(G)$ as $G\{s,i\}$ and $G\{t,i\}$.
We want to construct a larger instance $(H, \mathcal{T}_H)$ so that if $(G, \mathcal{T})$ is fully satisfiable then $(H, \mathcal{T}_H)$ is as well, but otherwise the maximal fraction of satisfiable pairs in $(H, \mathcal{T}_H)$ is strictly less than $\frac{k-1}{k}$.
Consider $k$ vertex-disjoint copies of the original instance: $(G_1, \mathcal{T}_1), (G_2, \mathcal{T}_2),\dots, (G_k, \mathcal{T}_k)$, that will be treated as the first layer.
Assume that $(G, \mathcal{T})$ is a NO-instance (i.e., one cannot satisfy all pairs at once), so for any orientation of the copies $\og_1, \og_2,\dots, \og_k$, there is a tuple $(j_1, j_2,\dots, j_k)$ such that $\og_i\{t, j_i\}$ is unreachable from $\og_i\{s, {j_i}\}$ in $\og_i$.
Suppose for now that we have fixed the values of $(j_i)$, even before we have finished building our instance.

Let $R = (r_1, r_2,\dots, r_k)$ be a tuple sampled randomly from $[k]^k$.
We connect the sinks in the first layer to the sources in another copy of the same instance -- let us refer to it as $(G_R, \mathcal{T}_R)$.
We add a directed edge from $G_i\{t, r_i\}$ to $G_R\{s, i\}$ for each $i \in [k]$, thus connecting a random sink of $G_i$ to the source $G_R\{s,i\}$,
as shown in Figure~\ref{fig:so1}.
We refer to the union of all $k+1$ copies of $G$ with $k$ added connecting edges as the graph $H$.
We~define $\mathcal{T}_H = ((G_1\{s, r_1\}, G_R\{t, 1\}),\dots,(G_k\{s, r_k\}, G_R\{t, k\}))$, so we want to satisfy those $k$ terminal pairs that got connected randomly.
Let $X$ be a~random variable equal to the value of the optimal solution for $(H, \mathcal{T}_H)$
under the restriction that the solution orients $G_1, G_2 \dots, G_k$ as
$\og_1, \og_2,\dots, \og_k$.
What would be the expected value of $X$?

\begin{figure}
\centering
\includegraphics[scale=.9]{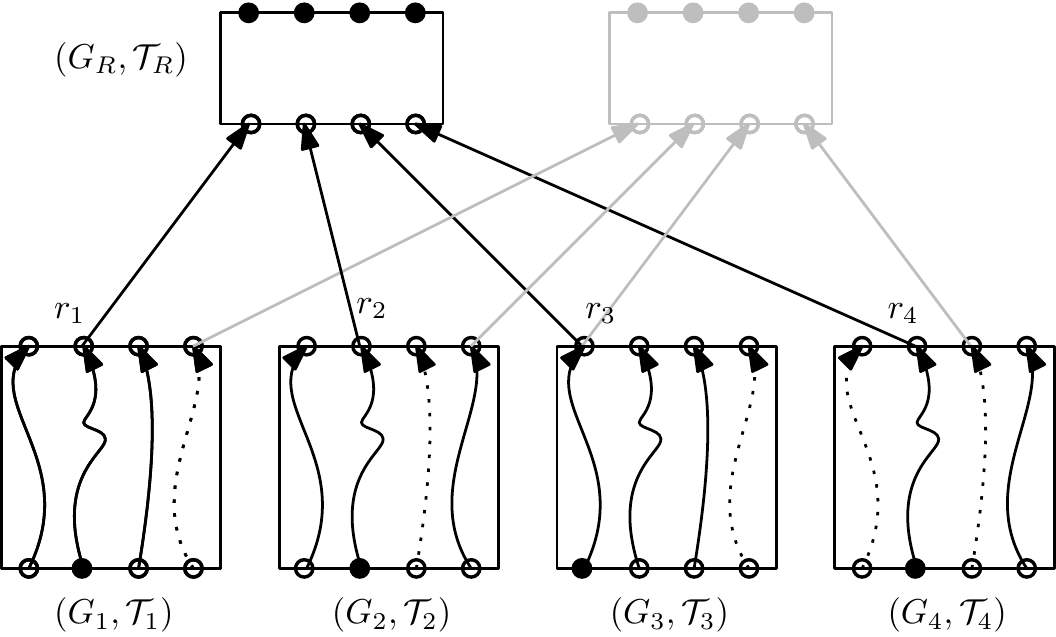}
\caption{ The construction of $(H, \mathcal{T}_H)$ for $k=4$.
The black copy $(G_R, \mathcal{T}_R)$ is connected according to the choice of $R = (r_1, r_2,r_3, r_4) = (2, 2, 1, 2)$ and the terminals (sources and sinks) are marked with black dots.
The dotted lines indicate which pairs of terminals are unreachable in each $\og_i$ and we can set $(j_1, j_2,j_3, j_4) = (4, 3, 4, 3)$ (another valid choice is $(4,3,4,1)$).
We have $r_i \ne j_i$ for each $i$, so $Y = 4$ but $X \le 3$.
The grey copy illustrates another random choice of connection: $R' = (r_1, r_2,r_3, r_4) = (4, 4, 1, 3)$ and $X \le Y = 2$.
If we wanted to construct $(H, \mathcal{T}_H)$ emulating the whole probabilistic space, we would include all $4^4$ copies of $(G_R, \mathcal{T}_R)$ for all choices of $R = (r_1, r_2,r_3, r_4)$ in the same way as the black and the grey copy.
}
\label{fig:so1}
\end{figure}

Let $Y$ denote another random variable being the number of
indices $i$ for which $r_i \ne j_i$. 
By~linearity of expectation we have $\ex Y \le k-1$.
It holds that $X \le Y$ and so far we still have only the bound $\ex X \le k-1$.
However, with probability $(\frac{k-1}{k})^{k}$ we have $r_i \ne j_i$ for all $i$, therefore $Y = k$, but we cannot connect all pairs within $G_R$ (because it is a copy of a NO-instance), so $X \le k-1$.
This means that $\ex(Y-X) \ge (\frac{k-1}{k})^{k}$ and so $\ex X \le k - 1 - (\frac{k-1}{k})^{k}$: the gap has been slightly amplified.

Of course in the proper reduction we cannot fix the orientation before adding the connecting edges.
However, we can afford an exponential blow-up with respect to $k$.
We can include in the second layer the whole probabilistic space, that is, $k^k$ copies of $(G, \mathcal{T})$ (rather than a single $(G_R, \mathcal{T}_R)$), each connected to the first layer with respect to a~different tuple $(r_1, r_2,\dots, r_k)$, thus creating a~large instance  $(H, \mathcal{T}_H)$ with $k^k \cdot k$ terminal pairs (see~Figure~\ref{fig:so1}).
For any orientation of ${H}$ the fraction of satisfied terminal pairs equals the average over the fractions for all $k^k$ groups of terminal pairs,
so we can emulate the construction above without fixing $(j_1, j_2,\dots, j_k)$.
The maximal fraction of satisfiable terminal pairs in such (no longer random) $(H, \mathcal{T}_H)$ would be the same as~before, that is, $\frac{\ex X}{k} < k-1$.
However, the smaller instance we create the better lower bounds we get, so we will try to be more economical while constructing  $(H, \mathcal{T}_H)$.

An important observation is that we do not have to include all $k^k$ choices of $R$ in the construction.
We just need a sufficient combination of them, so that the gap amplification occurs for any choice 
of $(j_1, j_2,\dots, j_k)$. 
This can be ensured by picking just $k^{\Oh(1)}$ choices of $R$ and using an argument based on the Chernoff bound (see Section~\ref{sec:so} for a detailed construction).

We can iterate this construction by treating $(H, \mathcal{T}_H)$ as a new input and amplifying the gap further in each step.
In further steps we need to add an exponential number of copies to the new layer, even when compressing the probabilistic space as above.
This is why we get an exponential blow-up with respect to $k$ and we need to work with a~parameterized hardness assumption, even for ruling out polynomial-time approximations.

The construction for \mmax $(k,p)$-\dmc is simpler because the layer stacking does not have to be iterated.
Therefore to achieve polynomial-time hardness it suffices to assume that NP $\not\subseteq$ co-RP.
The phenomenon that both problems admit such strong self-reduction properties can be explained by the fact that when dealing with directed reachability one can compose instances sequentially, which is the first step in both reductions.

\section{Inapproximability of \so}
\label{sec:so}


We first formulate the properties of the construction and discuss how they are used to prove the claims of the paper.
Then, we focus on the construction itself in Section~\ref{sec:correct}.
Let $S(G,\mathcal{T})$ denote the maximal number of pairs that can be satisfied by~some~orientation in the instance $(G,\mathcal{T})$.

\begin{lemma}\label{thm:so-gap}
There is a procedure that, for an instance $(G,\mathcal{T}_G)$ of \so and parameter $q$, constructs a new instance $(H,\mathcal{T}_H),\, k_0=|\mathcal{T}_H|$, such that:
\begin{enumerate}
\itemsep0em
    \item $k_0 = 2^{q^{\Oh(k)}}$,
    \item $|V(H)| \le |V(G)| \cdot k_0^2$, 
    \item if $S(G,\mathcal{T}_G) = k$, then $S(H,\mathcal{T}_H) = k_0$ always (Completeness),
    \item if $S(G,\mathcal{T}_G) < k$, then $S(H,\mathcal{T}_H) \le \frac{1}{q}\cdot k_0$ with probability at least $\frac{1}{k_0}$ (Soundness).
\end{enumerate}
The randomized construction runs in time proportional to $|H|$.
It can be derandomized within running time $f(k,q)\cdot |G|$.
\end{lemma}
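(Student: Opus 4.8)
The plan is to implement the thought experiment from Section~\ref{sec:technique} with two modifications: iterating the single-layer amplification roughly $q^{\Oh(k)}$ times to push the gap from $\frac{k-1}{k}$ all the way up to $\frac{1}{q}$, and replacing the full $k^k$-sized probabilistic space by a pseudorandom sample of size $k^{\Oh(1)}$ selected by sampling tuples $R \in [k]^k$ uniformly and independently. First I would set up the single-step construction precisely: given a current instance $(G,\mathcal{T}_G)$ with $k$ pairs, take $N = k^{c}$ fresh copies $(G^{(1)},\mathcal{T}^{(1)}),\dots,(G^{(N)},\mathcal{T}^{(N)})$ forming the first layer (each one itself being the input instance), sample tuples $R_1,\dots,R_M \in [k]^k$, and for each sampled $R_j = (r^j_1,\dots,r^j_k)$ introduce one copy $(G_{R_j},\mathcal{T}_{R_j})$ of the input connected by directed edges $G^{(i)}\{t,r^j_i\} \to G_{R_j}\{s,i\}$, with new terminal pairs $(G^{(i)}\{s,r^j_i\},\, G_{R_j}\{t,i\})$. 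Completeness is immediate: a global orientation satisfying all pairs in every copy of $(G,\mathcal{T}_G)$ satisfies every composed pair, since each composed $s$-$t$ walk decomposes into a satisfied path in some first-layer copy, the connecting edge, and a satisfied path in $G_{R_j}$.

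The soundness analysis is where the probabilistic argument of Section~\ref{sec:correct} does the work, so I would invoke it rather than redo it. The point is: fix any orientation of $H$; this restricts to an orientation of each first-layer copy, and since $(G,\mathcal{T}_G)$ is a NO-instance each copy $G^{(i)}$ has at least one unreachable pair, giving an ``obstruction tuple'' $(j_1,\dots,j_k)$. For a uniformly random $R$ we have $\Pr[r_i \ne j_i \text{ for all } i] = (1-\tfrac1k)^k \ge \tfrac14$ (say), and on this event the composed block for $R$ cannot be fully satisfied because $G_R$ is a copy of a NO-instance, so it contributes value at most $k-1$ out of $k$. A Chernoff bound over the $M$ independently sampled tuples shows that, except with probability $k^{-\Theta(c)}$, a constant fraction of the sampled blocks are ``bad'' for this particular obstruction tuple; a union bound over the at most $k^k \cdot (\text{number of orientation-induced obstruction patterns})$ relevant tuples — which is at most $k^{\Oh(k)}$, absorbed by choosing $M = k^{\Oh(1)}$ large enough relative to $k^k$ — yields that with probability $\ge 1 - 2^{-\Omega(M)}$ the new instance has optimum fraction at most $1 - \delta$ for an absolute constant $\delta > 0$. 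Then I would iterate: starting from the W[1]-hard decision instance (gap $1$ versus $1-\tfrac1k$), each round multiplies the ``satisfiable fraction'' gap factor, so after $t = \Oh(\log q / \log\frac{1}{1-\delta}) \cdot (\text{something}) $ — more precisely $t$ rounds suffice once the parameter has grown enough — we reach fraction $\le \frac1q$; bookkeeping the parameter blow-up $k_{i+1} = k_i \cdot M \le k_i^{\Oh(1)}$ across the $\Oh(\log q)$-ish rounds needed after an initial phase gives $k_0 = 2^{q^{\Oh(k)}}$, and the vertex count satisfies $|V(H)| \le |V(G)|\cdot k_0^2$ since the final instance is a bounded number of layers of at most $k_0$ copies each. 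The success probabilities at each round are at least constant, and multiplying over the rounds (or re-randomizing) gives success probability $\ge \tfrac{1}{k_0}$.

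For the derandomization claim I would argue that the only randomness is in choosing the tuples $R_1,\dots,R_M$ in each round; since the union-bound event we need is over a set of size $k^{\Oh(k)}$ and each ``bad-fraction'' event is a function of the sample, a standard conditional-expectation / method-of-conditional-probabilities argument, or brute-force search over an explicit $k^{\Oh(1)}$-wise independent sample space (or even over all $\binom{k^k}{M}$ choices since $k$ and $q$ are fixed), finds a good sample in time $f(k,q)$; the rest of the construction is deterministic and linear in the output size, so total time is $f(k,q)\cdot |G|$.

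The main obstacle I expect is controlling the parameter growth and the union-bound cardinality simultaneously. Each amplification round blows the number of terminal pairs up polynomially, so the $\log q$-ish number of rounds needed to drive the fraction below $\tfrac1q$ has to be balanced against the fact that the union bound in round $i$ is over $k_i^{\Oh(k_i)}$ obstruction patterns while we only sample $M_i = k_i^{\Oh(1)}$ tuples from $[k_i]^{k_i}$ — so $M_i$ must be chosen on the order of $k_i^{k_i} \cdot \mathrm{poly}$, which is what forces the tower $2^{q^{\Oh(k)}}$ rather than something smaller, and one has to check carefully that this still stays within $2^{q^{\Oh(k)}}$ and not worse after all rounds. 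A secondary subtlety is formalizing ``the fraction of satisfied pairs in $H$ is the average over the $M$ composed blocks'': this requires that the terminal set $\mathcal{T}_H$ is exactly the disjoint union of the per-block terminal sets and that no solution can do better than optimizing each block independently given a fixed first-layer orientation — which is true because the blocks share only the first layer, but it deserves an explicit argument, and this is precisely the content deferred to Section~\ref{sec:correct}.
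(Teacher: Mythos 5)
Your overall architecture is the paper's (layered self-composition, randomly sampled connection tuples, Hoeffding plus a union bound, iteration, brute-force derandomization), but the quantitative heart of the soundness step is wrong, and this is a genuine gap. The per-round claim — that one amplification step pushes the satisfiable fraction down to $1-\delta$ for an absolute constant $\delta>0$, after which "each round multiplies the gap factor" so roughly $\Oh(\log q)$ rounds suffice — does not hold for this construction. The obstruction-tuple argument only shows that, on an event of constant probability, a block of $k$ composed pairs loses at least one pair out of $k$; that is an additive loss of order $\frac 1 k$, not a constant fraction. Worse, once the current fraction $y_i$ approaches $\frac 1 q$, the only source of additional loss is the event that all $k$ sampled coordinates land on satisfied pairs, which has probability only about $y_i^{k}\ge q^{-k}$, so the per-round gain degrades to $\Theta(q^{-k}/k)$. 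This is exactly why the paper's Lemma~\ref{lem:layer} proves the additive bound $y_{i+1}\le y_i-\frac{1}{2k}q^{-k}$ and iterates $B=2kq^{k}$ times (which is where $k_0=2^{q^{\Oh(k)}}$ really comes from), rather than $\Oh(\log q)$ times. Moreover, beyond the first round a single obstruction index per first-layer copy is not enough information to bound a block's value: you must control, for every orientation, the full configuration $C=(C_1,\dots,C_k)$ of satisfied pairs in each copy (at most $2^{k k_i}$ of them) and bound $\ex f_C(R)$ in terms of $c_j=|C_j|/k_i\le y_i$, which is precisely the content of Lemma~\ref{lem:layer}.

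Your parameter bookkeeping is also off, in a way that would break property (1). Hoeffding with a union bound needs only $M_i=\Oh\left(\delta^{-2}\log(\#\text{events})\right)$ samples, so even with $k_i^{\Oh(k_i)}$ (or $2^{kk_i}$) events the required sample size is polynomial in $k_i$ — the paper takes $p_{i+1}=\Oh(k^4q^{2k}p_i)$ — not "on the order of $k_i^{k_i}$" as you assert; had you really needed $M_i\approx k_i^{k_i}$, the parameter would grow as an iterated exponential across rounds and would exceed $2^{q^{\Oh(k)}}$. The paper keeps the per-round blow-up factor fixed and independent of the round by making every new layer consist of copies of the \emph{original} $k$-pair instance $(G,\mathcal{T}_G)$, so that $\delta=\frac{1}{2k}q^{-k}$ is the same in every round, and the final bound follows from $(k^4q^{2k})^{\Oh(kq^k)}=2^{q^{\Oh(k)}}$. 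Your completeness argument, the averaging over blocks, the success-probability accounting, and the derandomization-by-enumeration sketch are fine; what is missing is the correct per-round soundness lemma and the round count it forces.
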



These properties are proven at the end of Section~\ref{sec:correct}.
It easily follows from them that the gap can get amplified to any constant~$q$.
It is more complicated to rule out a~superconstant approximation factor, e.g. $\alpha(k) = \log\log k$, because we would like to apply the hypothetical algorithm to the instance $(H,\mathcal{T}_H)$ with parameter $k_0$ depending on $k$ and $q$, so
we additionally need to adjust $q$ so that $\alpha(k_0(k,q)) \le q$.

\mainso*
\begin{proof}
We are going to reduce the exact version of $k$-\so, which is W[1]-hard, to the version with a sufficiently large gap, with Lemma~\ref{thm:so-gap}.
For a~fixed $k$
we can bound $k_0$ by a~function of $q$: $k_0(q) \le 2^{q^{c\cdot k}}$ for some constant $c$.
On the other hand, $k_0(q) \rightarrow \infty$.
Given an instance $(G,\mathcal{T}_G)$ we use Lemma~\ref{thm:so-gap} with~$q$ large enough, so that $\beta(k_0(q))\cdot c\cdot k \le 1$ (this is possible because $\beta(k) \rightarrow 0$).
Such a dependency $q(k)$ is also a computable function.
We get $\alpha(k_0) = (\log k_0)^{\beta(k_0)} \le q^{c\cdot k \cdot \beta(k_0)} \le q$.

We have obtained a new instance $(H,\mathcal{T}_H)$
of $k_0$-\so  of size $f(k)\cdot |V(G)|$ and $k_0$ being a function of $k$.
If the original instance is fully satisfiable then the same holds for $(H,\mathcal{T}_H)$ and otherwise
$S(H,\mathcal{T}_H) \le \frac{1}{q}\cdot k_0 \le \frac{1}{\alpha(k_0)}\cdot k_0$, which finishes the reduction.
\end{proof}

If we restrict the running time to be purely polynomial,
we can slightly strengthen the lower bound, i.e., replace $k$ with $n$ in the approximation factor, while working with
a~similar hardness assumption.
To make this connection, we observe that in order to show that a problem is in FPT, it suffices to solve it in polynomial time for some superconstant bound on the parameter.

\begin{fact}\label{lem:xp}
Consider a parameterized problem $\Pi \in$ \emph{XP} that admits a polynomial-time algorithm (resp. false-biased polynomial-time algorithm) for the case $f(k) \le |I|$, where $f$ is some computable function.
Then $\Pi$ admits an \emph{FPT} algorithm (resp. false-biased \emph{FPT} algorithm).
\end{fact}
\begin{proof}
Since $\Pi \in$ XP, it admits a~deterministic algorithm with running time $|I|^{g(k)}$.
Whenever $f(k) \le |I|$, we~execute the polynomial-time algorithm.
Otherwise we can solve it in time $f(k)^{g(k)}$.
\end{proof}

\polyso*
\begin{proof}
Suppose there is such an algorithm with approximation factor $\alpha(n) = (\log n)^{\beta(n)}$. 
Let $\beta^*(\ell)$ be the smallest integer $L$, for which $\beta(L) \le \frac 1 \ell$.
The function $\beta^*$ is well defined and computable, because $\beta$ is computable.
Again, for fixed $k$ and some constant $c$ we have dependency $k_0(q) \le 2^{q^{c\cdot k}}$.

We are going to use the polynomial-time algorithm to solve $k$-\so in randomized $f(k)\cdot n^{\Oh(1)}$ time, which would imply the claim.
Since the problem is in XP~\cite{xp},
by Fact~\ref{lem:xp} it suffices to solve instances satisfying $\beta^*((c k)^2) \le n$ in polynomial time.
We~can thus assume $(c k)^2\cdot \beta(n) \le 1$, or equivalently
$ck \cdot \beta(n) \le \sqrt{\beta(n)}$.

Given an instance of $k$-\so, we apply Lemma~\ref{thm:so-gap} with $q = (2\log n)^{\beta(n)}$. 

$$ k_0 \le 2^{q^{ck}} = 2^{(2\log n)^{ck\cdot \beta(n)}} \le 2^{(2\log n)^{\sqrt{\beta(n)}}} = n^{o(1)},$$
$$\alpha(|V(H)|) \le \alpha(n\cdot k_0^2) =  \alpha(n^{1 + o(1)}) = ((1+o(1))\cdot \log n)^{ \beta(n)} .$$

For large $n$ we obtain $\alpha(|V(H)|) \le q$.
Since $|V(H)| \le n\cdot k_0^2 = n^{1+o(1)}$, the size of the new instance of polynomially bounded and thus the randomized construction takes polynomial time.
If~we started with a~YES instance, we always produce a~YES instance, and otherwise
$S(H,\mathcal{T}_H) \le \frac{1}{q}\cdot k_0 \le \frac{1}{\alpha(|V(H)|)}\cdot k_0$ with probability at least $\frac{1}{k_0}$, so we need to repeat the procedure $k_0 = n^{o(1)}$ times to get a~constant probability of~being correct.
A~hypothetical algorithm distinguishing these cases would therefore entail a~false-biased polynomial-time algorithm for \so
for the case $\beta^*((c k)^2) \le n$.
The claim follows from Fact~\ref{lem:xp}.
\end{proof}

\subsection{The gap amplifying step}
\label{sec:correct}



We are going to present a formal construction of the argument sketched in Section~\ref{sec:technique}.
Given fixed orientations of the $k$ copies of $G$, we are able to randomly sample $k$ sinks and insert additional edges so that the expected optimum of the new instance is sufficiently upper bounded.
We want to reverse this idea, so we could randomly sample a moderate number of additional connections once to ensure the upper bound works for any orientation.
To this end, we need some kind of a~hashing technique to mimic the behaviour of the probabilistic space with a~structure of moderate size.
Examples of such constructions are (generalized) universal hash families
\cite{hash, hash2, vazirani-phd}
or expander random walk sampling~\cite{expander}.
Even though the construction presented below is relatively simple, we are not aware of any occurrences of it in the literature.

For a~set $X_1$ and a multiset $X_2$, we write $X_2 \subseteq X_1$ if every element from~$X_2$ appears in~$X_1$.
Let $U(X)$ denote the uniform distribution over a finite multiset $X$.
In~particular, each distinct copy of the same element in $X$ has the same probability of being chosen: $\frac{1}{|X|}$.

\begin{definition}
For a family $\mathcal{F}$ of functions $X \rightarrow [0,1]$,
a $\delta$-biased sampler family is a multiset $X_H \subseteq X$, such that
for every $f \in \mathcal{F}$ it holds

$$ \left| \ex_{x \sim U(X_H)} f(x) - \ex_{x \sim U(X)} f(x) \right| \le \delta. $$
\end{definition}

\begin{lemma}
\label{lem:hash}
For a given $X,\, \mathcal{F}$, and $\delta > 0$,
a sample of $\Oh( \delta^{-2}\log(| \mathcal{F}|))$ elements from $X$ (sampled independently with repetitions)
forms a~$\delta$-biased sampler family with probability at least~$\frac{1}{2}$.
\end{lemma}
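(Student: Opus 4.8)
The plan is to prove this by a standard Chernoff/Hoeffding bound together with a union bound over $\mathcal{F}$. Fix a function $f \in \mathcal{F}$ and let $x_1, \dots, x_m$ be the independent samples from $U(X)$, where $m = C\delta^{-2}\log(|\mathcal{F}|)$ for a suitable constant $C$ to be fixed later. Set $\mu = \ex_{x \sim U(X)} f(x)$; then each $f(x_j)$ is an independent random variable taking values in $[0,1]$ with mean $\mu$, and $\ex_{x \sim U(X_H)} f(x) = \frac1m \sum_{j=1}^m f(x_j)$. So the quantity we want to control is exactly the deviation of the empirical average of $m$ bounded i.i.d.\ variables from their common mean.

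First I would invoke Hoeffding's inequality: for independent $Z_1,\dots,Z_m \in [0,1]$ with mean $\mu$,
\[
\Pr\!\left[\left|\tfrac1m\textstyle\sum_j Z_j - \mu\right| > \delta\right] \le 2\exp(-2m\delta^2).
\]
Plugging in $m = C\delta^{-2}\log(|\mathcal{F}|)$ gives a failure probability at most $2\exp(-2C\log|\mathcal{F}|) = 2|\mathcal{F}|^{-2C}$ for the fixed $f$. Choosing $C$ so that, say, $2C \ge 2$ and with a tiny bit of slack (the constant hidden in the $\Oh(\cdot)$ can absorb this), the probability that the bad event occurs for a particular $f$ is at most $\frac{1}{2|\mathcal{F}|}$ once $|\mathcal{F}| \ge 2$; the degenerate case $|\mathcal{F}| \le 1$ is trivial (a single sample already works, or one treats $\log|\mathcal{F}|$ as at least $1$ in the $\Oh$-notation). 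Then a union bound over all $f \in \mathcal{F}$ shows that with probability at least $1 - |\mathcal{F}| \cdot \frac{1}{2|\mathcal{F}|} = \frac12$, the deviation bound holds simultaneously for every $f \in \mathcal{F}$, which is precisely the statement that $X_H = \{x_1,\dots,x_m\}$ is a $\delta$-biased sampler family.

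There is essentially no hard part here — the argument is a textbook concentration-plus-union-bound. The only points requiring a little care are: (i) making sure the constant in $\Oh(\delta^{-2}\log|\mathcal{F}|)$ is large enough to beat the union-bound factor $|\mathcal{F}|$ and leave room for the $\frac12$ success probability; (ii) handling small $\mathcal{F}$ (where $\log|\mathcal{F}|$ could be $0$ or negative) — this is cosmetic and absorbed into the $\Oh$-notation by interpreting it as $\max(1,\log|\mathcal{F}|)$ or just noting the statement is vacuous/trivial there; and (iii) noting that the samples are drawn from the multiset $X$ with repetitions, so they are genuinely i.i.d.\ under $U(X)$, which is exactly the hypothesis of Hoeffding's inequality. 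I would state Hoeffding's inequality as a cited fact and keep the proof to a few lines.
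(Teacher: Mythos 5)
Your proposal is correct and follows essentially the same route as the paper: sample $\Oh(\delta^{-2}\log|\mathcal{F}|)$ elements i.i.d.\ from $U(X)$, apply Hoeffding's inequality to bound the deviation probability for each fixed $f \in \mathcal{F}$ by $\frac{1}{2|\mathcal{F}|}$, and conclude by a union bound over $\mathcal{F}$. The only differences are cosmetic (a generic constant $C$ in place of the paper's explicit factor $10$, and your explicit remarks on degenerate $|\mathcal{F}|$).
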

\begin{proof}
We sample independently $M = 10\cdot \delta^{-2}\log(| \mathcal{F}|)$ elements from $X$ with repetitions.
For the sake of analysis, note that this is a single sample from the space $\Omega_{X,M}$ being the family of all $M$-tuples of elements from $X$, equipped with a uniform distribution.
Let $X_H$ denote the random multiset of all elements in this $M$-tuple.
For each $f \in \mathcal{F}$ we define $A_f \subseteq \Omega_{X,M}$ as the family of tuples for which $\left| \ex_{x \sim U(X_H)} f(x) - \ex_{x \sim U(X)} f(x) \right| > \delta$.
For a fixed $f$ we apply the Hoeffding's inequality.

$$  \mathbb{P}(A_f) = \mathbb{P}\left(\bigg|\ex_{x \sim U(X_H)} f(x) - \ex_{x \sim U(X)} f(x) \bigg| > \delta\right) \le 2\exp(-2\delta^2 M).$$
For our choice of $M$ this bound gets less than $\frac{1}{2 | \mathcal{F}|}$.
By union bound, the probability that $X_F$ is not a~$\delta$-biased sampler family is
$\mathbb{P}\left(\bigcup_{f \in \mathcal{F}} A_f\right) \le \sum_{f \in \mathcal{F}} \mathbb{P}(A_f) \le \frac{1}{2}$.
The claim follows.
\end{proof}

The rest of this section is devoted to proving Lemma~\ref{thm:so-gap}.
We keep the concise notation from Section~\ref{sec:technique}: for~an~instance $(G, \mathcal{T})$,  $\mathcal{T} = ((s_1,t_1),\dots,(s_k,t_k))$ of $k$-\so
we refer to~the~vertices $s_i, t_i \in V(G)$ as $G\{s,i\}$ and $G\{t,i\}$

\paragraph{Building the layers}
Given an instance $(G, \mathcal{T}_G)$, our aim is to build a larger instance, so that if $S(G,\mathcal{T})=k$ then the new one is also fully satisfiable, but otherwise the maximal fraction of terminal pairs being simultaneously satisfiable in the new instance is at most $\frac{1}{q}$.

We inductively construct a family of instances $(H^i, \mathcal{T}^i)_{i=1}^M$ with $(H^1, \mathcal{T}^1) = (G, \mathcal{T}_G)$.
Let $k_i = |\mathcal{T}^i|$ and $p_i$ indicate the number of copies of $(G, \mathcal{T}_G)$ in the last layer (to be explained below) of $(H^i, \mathcal{T}^i)$.
We will have that $p_1 = 1$ and $k_i = |\mathcal{T}^i| = k\cdot p_i$.
We construct $(H^{i+1}, \mathcal{T}^{i+1})$ by taking $k$ vertex-disjoint copies of the $i$-th instance, denoted $(H^i_1, \mathcal{T}^i_1),\dots, (H^i_k, \mathcal{T}^i_k)$ and forming a new layer of copies of $(G, \mathcal{T}_G)$ which will be randomly connected to the $i$-th layer through directed edges.
Therefore graph $H^{i+1}$ will have $i+1$ layers of copies of $G$.

Let $\mathcal{R} = [k_i]^k$ be the family of $k$-tuples $(r_1, r_2, \dots r_k)$ with elements from the set $[k_i]$.
We sample a random tuple $R = (r_1, r_2, \dots r_k)$ from $\mathcal{R}$ and create a new copy of the original instance -- let us refer to it as $(G_R, \mathcal{T}_R)$.
We add a directed edge from $H^i_j\{t, r_j\}$ to $G_R\{s, j\}$ for each $j \in [k]$, thus connecting a random sink of $H^i_j$ to the source $G_R\{s,i\}$.
We insert $k$ pairs to $\mathcal{T}^{i+1}$: $(H^i_1\{s, r_1\}, G_R\{t, 1\}), \dots, (H^i_k\{s, r_k\}, G_R\{t, k\})$.
We iterate this subroutine $p_{i+1} = \Oh(k^4q^{2k}p_i)$ times (a~derivation of this quantity is postponed to Lemma~\ref{lem:layer}),
as shown in Figure~\ref{fig:so}.

\begin{figure}
\centering
\includegraphics[scale=0.75]{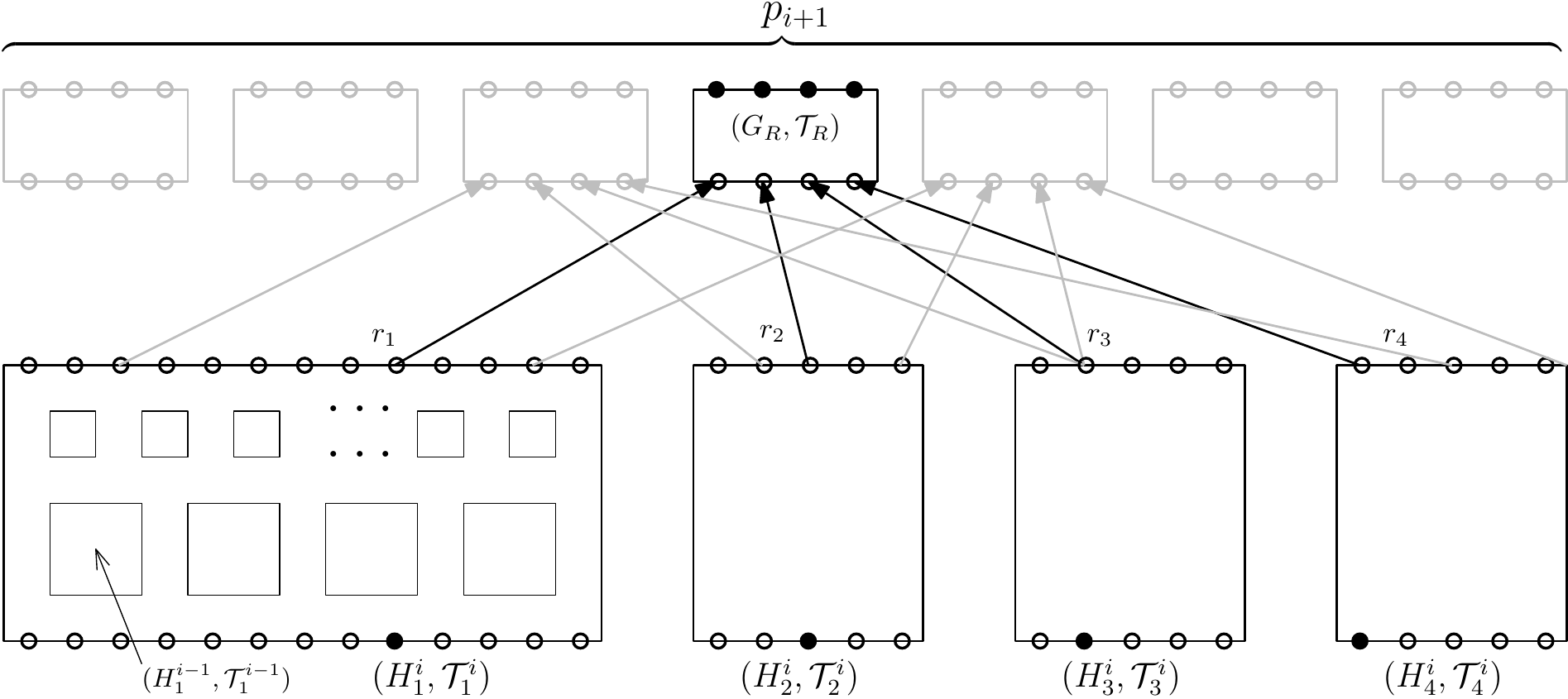}
\caption{ The construction of $(H^{i+1}, \mathcal{T}^{i+1})$ for $k=4$.
The first copy of $(H^{i}, \mathcal{T}^{i})$ is shown in greater detail to highlight the recursive construction, with circles representing its terminal pairs.
The new layer consists of $p_{i+1}$ copies of $(G, \mathcal{T})$.
The vector $R = (r_1, r_2,r_3, r_4) \in [p_i]^4$ indicates which sinks are connected to the sources in $(G_R, \mathcal{T}_R)$.
The black dots show $4$ terminal pairs associated with this copy.
For the sake of legibility only the edges incident to $G_R$ (the black ones) and neighboring copies (the grey ones) are sketched.
}\label{fig:so}
\end{figure}

\begin{lemma}
\label{lem:layer}
Let $y_i = S(H^i,\mathcal{T}^i)\, /\, k_i$ be the maximal fraction of terminal pairs that can be simultaneously satisfied in $(H^i,\mathcal{T}^i)$.
Suppose that $S(G,\mathcal{T}_G) < k$ and $y_i \ge \frac 1 q$.
Then with probability at least $\frac 1  2$ it holds $y_{i+1} \le y_i - \frac{1}{2k}\cdot q^{-k}$.
\end{lemma}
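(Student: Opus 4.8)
The plan is to follow the probabilistic sketch from Section~\ref{sec:technique} and turn it into a rigorous statement using the $\delta$-biased sampler family from Lemma~\ref{lem:hash}. Fix an arbitrary orientation $\widetilde{H^{i+1}}$ of $H^{i+1}$; I want to upper bound the fraction of satisfied pairs in $(H^{i+1},\mathcal{T}^{i+1})$, and then take a union bound over the (finitely many, but exponentially many) relevant orientations. The key reduction is that an orientation of $H^{i+1}$ restricts to orientations $\widetilde{H^i_1},\dots,\widetilde{H^i_k}$ of the $k$ first-layer copies, plus an orientation of each new-layer copy $(G_R,\mathcal{T}_R)$. Since $(H^i,\mathcal{T}^i)$ with the chosen orientation satisfies at most $y_i k_i$ pairs, in each copy $H^i_j$ there is a set $B_j \subseteq [k_i]$ of "bad" terminal indices with $|B_j| \ge (1-y_i)k_i \ge 0$; more usefully, for the argument I need that in \emph{each} $H^i_j$ at least $(1-y_i)k_i$ of its $k_i$ sinks are unreachable from their partners. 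Actually the cleanest route mirrors the sketch: because $(G,\mathcal{T}_G)$ is a NO-instance, for the chosen orientation of each new-layer copy $G_R$ there is an index $j_R \in [k]$ whose pair is internally broken inside $G_R$; and a pair $(H^i_{j}\{s,r_j\}, G_R\{t,j\})$ in the group associated with $R$ can be satisfied only if (a) $r_j \notin B_j$ (the first-layer segment works) \emph{and} (b) $j \ne j_R$ or the $G_R$-segment for $j$ works anyway.

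The core quantitative step is: for a fixed choice of the "bad sets" $B_1,\dots,B_k \subseteq [k_i]$ with $|B_j| \ge (1-y_i)k_i$ and fixed $j_R$'s (the adversary's choice), the fraction of satisfiable pairs among the $p_{i+1}$ groups is, in expectation over a uniformly random $R \in \mathcal{R} = [k_i]^k$, at most $y_i - \tfrac{1}{k}q^{-k}$ (roughly), by exactly the computation in Section~\ref{sec:technique}: the probability that all $r_j \notin B_j$ is $\le y_i^k$... wait, I should instead track the event $r_j \notin B_j$ for all $j$ \emph{and} the group is fully satisfiable, which additionally requires $G_R$ to satisfy all $k$ pairs, impossible. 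So the per-group success probability drops below $y_i$ by an additive term that is at least $\Pr[\forall j: r_j \notin B_j] \cdot \tfrac1k \ge (1-y_i)^k \cdot \tfrac1k \ge q^{-k}/k$, using $y_i \le 1 - 1/q$ so $1-y_i \ge 1/q$. Then I invoke Lemma~\ref{lem:hash} with $X = \mathcal{R}$, with $\mathcal{F}$ the family of functions indexed by all adversary choices (orientations of a single copy of $G$ determine the $B_j$-type data and $j_R$; there are at most $2^{|E(G)|}$ such, so $|\mathcal{F}| \le 2^{|E(G)| \cdot \mathrm{poly}}$), and $\delta = \tfrac{1}{2k}q^{-k}$. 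Lemma~\ref{lem:hash} gives $M = p_{i+1} = \Oh(\delta^{-2}\log|\mathcal{F}|) = \Oh(k^2 q^{2k}\,|E(G)|\,\mathrm{poly})$, which matches the claimed $p_{i+1} = \Oh(k^4 q^{2k} p_i)$ after bounding $|E(G)|$ and the number of orientations appropriately; with probability $\ge \tfrac12$ the sampled multiset of $R$'s is $\delta$-biased, meaning for \emph{every} adversary choice the empirical average over the $p_{i+1}$ groups is within $\delta$ of its expectation, hence at most $y_i - q^{-k}/k + \delta \le y_i - \tfrac{1}{2k}q^{-k}$. Since this holds simultaneously for all relevant configurations, it holds for the true optimum, giving $y_{i+1} \le y_i - \tfrac{1}{2k}q^{-k}$.

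I would organize the write-up as: (1) describe how an orientation of $H^{i+1}$ decomposes and define, for fixed orientation data, the indicator $g_R \in \{0,1\}$ (or rather a $[0,1]$-valued average over the $k$ pairs of group $R$) that a group's pairs are satisfied; (2) the expectation bound $\ex_{R \sim U(\mathcal{R})} g_R \le y_i - q^{-k}/k$ by the argument above; (3) set up $\mathcal{F}$ as the family of all such functions $R \mapsto g_R$ ranging over orientations of one copy of $G$ and over all legal "bad set" patterns consistent with $y_i$, count $|\mathcal{F}|$; (4) apply Lemma~\ref{lem:hash} to get that the chosen $p_{i+1}$-sized sample is $\delta$-biased with probability $\ge 1/2$; (5) conclude by noting the value of any orientation of $H^{i+1}$ is the average of $g_R$ over the sampled $R$'s (times appropriate normalization), which is $\le y_i - \tfrac{1}{2k}q^{-k}$ for every orientation, hence $y_{i+1} \le y_i - \tfrac{1}{2k}q^{-k}$.

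The main obstacle I anticipate is step (3): being careful about what exactly the family $\mathcal{F}$ must be so that a single $\delta$-biased sample handles \emph{all} orientations of $H^{i+1}$ at once. The subtlety is that the orientations of the first-layer copies and of the new-layer copies are coupled through the function $g_R$, but the new-layer copies $(G_R,\mathcal{T}_R)$ are indexed by the \emph{sample itself}, so one cannot literally put "an orientation of every new-layer copy" into $\mathcal{F}$ before sampling. The fix is that $g_R$, as a function of $R$, should be defined to take the \emph{best} (maximizing) orientation of the copy $G_R$ given the incoming connections — i.e. $g_R$ depends on $R$ and on the orientations of the first-layer copies only, maximizing out the single-copy orientation — so $\mathcal{F}$ is indexed by (orientation of the first layer) which in turn is captured by the $k$ bad-set patterns, a quantity bounded by $(\text{\# orientations of }G)^k \cdot (\text{choices of }j_R\text{-structure})$. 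Getting this indexing right, and verifying the resulting $|\mathcal{F}|$ and $\delta$ plug into Lemma~\ref{lem:hash} to yield precisely the stated $p_{i+1}$, is the delicate part; everything else is the linearity-of-expectation computation already rehearsed informally in Section~\ref{sec:technique}.
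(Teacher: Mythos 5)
Your overall architecture is the same as the paper's: fix the first-layer orientations only through their reachability pattern, define a $[0,1]$-valued function of the connecting tuple $R$ that maximizes out the orientation of the single copy $G_R$, bound its expectation over uniform $R$, and then apply Lemma~\ref{lem:hash} with a union over all patterns. However, your central quantitative step has a genuine error. Writing $c_j$ for the fraction of indices of $[k_i]$ that are satisfiable in the $j$-th first-layer copy, the probability that all $r_j$ avoid the bad sets is $\prod_j c_j$, and the data you have ($|B_j| \ge (1-y_i)k_i$, i.e.\ $c_j \le y_i$) gives \emph{no} lower bound on this product: some copy may satisfy no pairs at all, making it $0$. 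Your claimed bound $\Pr[\forall j\colon r_j \notin B_j] \ge (1-y_i)^k$ is false (that expression lower-bounds the all-\emph{bad} event, which is irrelevant), and the hypothesis you invoke, $y_i \le 1 - \frac 1 q$, is not the lemma's hypothesis, which is $y_i \ge \frac 1 q$. The paper closes exactly this hole: it shows $\ex_R f_C(R) \le \frac 1 k\bigl(\sum_j c_j - \prod_j c_j\bigr)$ and then observes that $\sum_j c_j - \prod_j c_j$ is non-decreasing in each $c_\ell$ (partial derivative $1 - \prod_{j \ne \ell} c_j \ge 0$), so the worst case is $c_j = y_i$ for all $j$, yielding $\ex_R f_C(R) \le y_i - \frac 1 k y_i^{\,k} \le y_i - \frac 1 k q^{-k}$ via $y_i \ge \frac 1 q$. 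The point of the monotonicity step is that when some $c_j$ is small the loss comes from the sum term instead of the product term; without it your chain of inequalities does not go through.

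The second problem is your count of $|\mathcal{F}|$. You index the family by orientations and bound it by powers of $2^{|E(G)|}$; this would make $\log|\mathcal{F}|$, and hence $p_{i+1} = \Oh(\delta^{-2}\log|\mathcal{F}|)$, grow with the size of $G$ (and in later iterations with the size of $H^i$), which after $B = 2kq^k$ layers destroys the bound $|V(H)| \le |V(G)|\cdot k_0^2$ and the fact that $k_0$ depends only on $k$ and $q$ — both essential for Lemma~\ref{thm:so-gap} and the theorems built on it. The correct indexing, which you gesture at in your ``delicate part'' paragraph but do not carry out, is by \emph{configurations}: tuples $C = (C_1,\dots,C_k)$ with $C_j \subseteq [k_i]$ recording which terminal pairs are reachable in each first-layer copy. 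Their number is at most $2^{k_i k} = 2^{p_i k^2}$, independent of $|V(G)|$, which with $\delta = \frac{1}{2k}q^{-k}$ gives exactly $p_{i+1} = \Oh(k^4 q^{2k} p_i)$. Your instinct to maximize out the orientation of $G_R$ so that the function depends on $R$ and the configuration only is correct and matches the paper; the missing pieces are the monotonicity argument in the expectation bound and the configuration-based (rather than orientation-based) count of $|\mathcal{F}|$.
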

\begin{proof}
First observe that for each $(s_j, t_j) \in \mathcal{T}^i$, any $(s_j, t_j)$-path in $(H^i,\mathcal{T}^i)$ runs through $i$ unique copies of $(G,\mathcal{T}_G)$.
Therefore $(s_j, t_j)$-pair is satisfied only if the corresponding $i$~terminal pairs in those copies are satisfied.
Recall that we have connected each copy of~$(G,\mathcal{T}_G)$ in the $i$-th layer to the terminals from the previous layer according to a~random tuple $R = (r_1, r_2, \dots r_k) \in \mathcal{R}$.

We will now analyze how the possible orientations $\tilde{H}^i_1, \dots, \tilde{H}^i_k$ influence the status of the terminal pairs in $\mathcal{T}^{i+1}$.
Let $C_j \subseteq [k_i]$ encode which of the terminal pairs are reachable in $\tilde{H}^i_j$, that is, $C_j = \{\ell \in [k_i] \colon \tilde{H}^i_j\{t,\ell\}  \text{ is reachable from }  \tilde{H}^i_j\{s,\ell\}\}$.
A tuple $C = (C_1, \dots C_k)$ is called a~\emph{configuration} and we denote the family of all feasible configurations as  $\mathcal{C}$.
We have $|\mathcal{C}| \le  (2^{k_i})^{k} = 2^{p_ik^2}$.
For a configuration $C \in \mathcal{C}$ let $f_C: \mathcal{R} \rightarrow [0,1]$ be a function describing the maximal fraction of satisfiable terminal pairs from those with sinks in $G_R$ connected through a~tuple $R \in \mathcal{R}$.
Note that $f_C(R)$ depends on $C, R$ and the best possible orientation of $G_R$, whereas it is oblivious to the rest of the structure of $(\tilde{H}^{i}_j)_{j=1}^k$.
We can thus think of~$C$ as an interface between the first $i$ layers and $G_R$.

For fixed orientations $\tilde{H}^i_1, \dots, \tilde{H}^i_k$, and therefore fixed configuration $C \in \mathcal{C}$, we estimate the expected value of $f_C$.
Let $Y_j$ be a random Boolean variable indicating that $\tilde{H}^i_j[t,r_j]$  is reachable from $\tilde{H}^i_j[s,r_j]$ for a random $(r_1, r_2, \dots r_k) \in \mathcal{R}$, i.e., that $r_j \in C_j$.
Let $Y = \sum_{j=1}^k Y_j / k$, so that we always have $f_C(R) \le Y$.

Let $c_j = |C_j| / k_i$.
By linearity of expectation $\ex Y = \sum_{j=1}^k c_j / k$ and by~the~assumption $c_j \le  y_i$ for all $j \in [k]$.
However, with probability $\prod_{j=1}^k c_j$ we have $r_j \in C_j$ for all $j$, so $Y = 1$, but we cannot connect all pairs within $(G_R, \mathcal{T}_R)$ (since we assumed $S(G,\mathcal{T}_G) < k$), so $f_C(R) \le 1-\frac 1 k$.
This means that
$$
\ex\left(f_C(R)-Y\right) \ge \frac{1}{k} \cdot \prod_{j=1}^k c_j, \quad \text{ and so } \quad
\ex f_C(R) \le  \frac{1}{k} \cdot \left(\sum_{j=1}^k c_j - \prod_{j=1}^k c_j \right).
$$


The quantity $\sum_{j=1}^k c_j - \prod_{j=1}^k c_j$ can only increase when we increase some $c_\ell$, because the $\ell$-th partial derivative is $1 - \prod_{j \in [k], \, j \neq \ell} c_j \ge 0$.
By the assumption $c_j \le y_i$ and $y_i \ge \frac{1}{q}$, hence
$$
\ex f_C(R) \le  \frac{1}{k} \cdot \left(\sum_{j=1}^k y_i - \prod_{j=1}^k y_i \right) \le y_i - \frac{1}{k} \cdot q^{-k}.
$$

Now we apply Lemma~\ref{lem:hash} for $\mathcal{F} = \{f_C : C \in \mathcal{C}\}$ and $\delta =   \frac{1}{2k}\cdot q^{-k}$ to argue that for our choice of $p_{i+1}$ -- the number of copies in the last layer -- the estimation works for all $C \in \mathcal{C}$ at~once.
The quantity $M = \Oh(\delta^{-2}\log(| \mathcal{F}|))$ in Lemma~\ref{lem:hash} becomes
$\Oh(k^4q^{2k}p_i)$, which is exactly as we defined $p_{i+1}$.
We have sampled $p_{i+1}$ tuples from $\mathcal{R}$ (let us denote this multiset as $\mathcal{R}_H \subseteq \mathcal{R}$) and added a~copy $(G_R, \mathcal{T}_R)$ for each $R \in \mathcal{R}_H$.
For a~fixed $C \in \mathcal{C}$, the maximal fraction of satisfiable terminal pairs in $(H^{i+1},\mathcal{T}^{i+1})$ equals the average of $f_C(R)$ over $R \in \mathcal{R}_H$.
By~Lemma~\ref{lem:hash}
we know that, regardless of the choice of $C$, this quantity is at most 

$$\ex_{R\sim U(\mathcal{R}_H)} f_C(R) \le \ex_{R\sim U(\mathcal{R})} f_C(R) +  \frac{1}{2k}\cdot q^{-k} \le y_i -  \frac{1}{2k}\cdot q^{-k},$$
with probability at least $\frac 1 2$ (that is, if we have chosen $\mathcal{R}_H$ correctly).
Since the upper bound works for all $C \in \mathcal{C}$ simultaneously, the claim follows. 

\end{proof}

\begin{proof}[Proof of Lemma~\ref{thm:so-gap}]
We define $(H,\mathcal{T}_H) = (H^B,\mathcal{T}^B)$ for $B = 2kq^k$.
The completeness is straightforward:
if $S(G, \mathcal{T}_G) = k$, then
we can orient all copies of $G$ so that $G\{t,j\}$ is always reachable from $G\{s,j\}$ and
each requested path in $S(H^i,\mathcal{T}^i)$ is given as a concatenation of~respective paths in $B$ copies of $G$.

To see the soundness, suppose that $S(G, \mathcal{T}_G) < k$.
The sequence $(y_i)_{i=0}^B$ is non-increasing and
the value of $y_i$ is being decreased by at least $\frac{i}{2k}\cdot q^{-k}$ in each iteration, as long as $y_i \ge \frac 1 q$, due to Lemma~\ref{lem:layer}.
Therefore after $B = 2kq^k$ iterations we are sure to have $y_B \le \frac 1 q$.


To estimate the size of the instance, recall that we have  $k_i =p_ik$ and $p_{i+1} = \Oh(k^4q^{2k}p_i)$.
We can assume $q \ge 2$ and so $k \le q^k$.
For $B = 2kq^k$, the value of~$p_B$ becomes $\left(k^4q^{2k}\right)^{\Oh(kq^k)} = 2^{q^{\Oh(k)}\Oh(\log q)} = 2^{q^{\Oh(k)}}$.
The size of $V(H)$ is at most $k_B \cdot |V(G)|$ times the number of layers, which is $B$.
We trivially bound $B \le 2^B \le k_B$ to obtain $|V(H)| \le |V(G)| \cdot k_B^2$.

The presented construction is randomized because we randomly choose a~biased sampler family in each of the $B$ steps.
If we start with a~YES-instance, then we produce a~YES instance regardless of these choices, and otherwise we produce a~NO instance
with probability at~least $2^{-B} \ge \frac {1}{k_B}$.
The construction can be derandomized within running time $f(k,q)\cdot |V(G)|$ as follows.
In each application
of Lemma~\ref{lem:hash} the sizes of $X$ and $\mathcal{F}$ are $(k_i)^k$ and $2^{p_ik^2}$, respectively, and $\delta =   \frac{1}{2k}\cdot q^{-k}$, which
are all bounded by a function of $k$ and $q$.
Therefore instead of sampling a~biased sampler family, we can enumerate all $\Oh( \delta^{-2}\log(| \mathcal{F}|))$-tuples of~elements from~$X$
 and find one giving a~biased sampler family.
\end{proof}

\section{Inapproximability of \dmc}
\label{sec:dmc}

We switch our attention to the \textsc{Max $(k,p)$-Directed Multicut} problem, for which we provide a~slightly simpler reduction.
We keep the same convention as before: within graph $G$ we refer to sources and sinks $(s_i, t_i) \in \mathcal{T}$ shortly as $G\{s,i\}$, $G\{t,i\}$, and denote the maximal number of  terminal pairs separable in $(G,\mathcal{T})$ by~deleting $p$ edges by $S(G,\mathcal{T}, p)$.

\begin{lemma}\label{thm:dmc-gap}
There is a procedure that, for an instance $(G,\mathcal{T}_G,p),\, |\mathcal{T}_G| = 4$ of \dmc and parameter $q$, constructs a new instance $(H,\mathcal{T}_H,p_0),\, k_0=|\mathcal{T}_H|$, such that:
\begin{enumerate}
\itemsep0em
    \item $k_0 = \Theta(p\cdot q^{2}\log q)$,
    \item $p_0 = \Theta(p^2 \log q)$,
    \item $|E(H)| = |E(G)| \cdot p_0 +  \Oh(k_0\cdot p_0)$,
    \item if $S(G,\mathcal{T}_G,p) = 4$, then $S(H,\mathcal{T}_H,p_0) = k_0$ always (Completeness),
    \item if $S(G,\mathcal{T}_G,p) < 4$, then $S(H,\mathcal{T}_H,p_0) \le \frac 1 q\cdot k_0$ with probability at least $\frac 1 2$ (Soundness).
\end{enumerate}
The randomized construction takes time proportional to $|H|$. It can be derandomized in time \mbox{$f(p,q)\cdot |G|$}.
\end{lemma}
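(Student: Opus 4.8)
The plan is to mirror the construction from Lemma~\ref{thm:so-gap}, but with only a single ``stacking'' layer, exploiting the fact that for \dmc the sequential composition of cut instances already gives a product-like behaviour on the separation probabilities. Concretely, I would start from an instance $(G, \mathcal{T}_G, p)$ with $|\mathcal{T}_G| = 4$ (this is the canonical W[1]-hard base case of $p$-\dmc), take $p$ vertex-disjoint copies $(G_1, \mathcal{T}_1, p), \dots, (G_p, \mathcal{T}_p, p)$ as a first layer, and then build a second layer of $p_0/4$ fresh copies of $(G, \mathcal{T}_G)$, each connected to the first layer through a randomly chosen tuple $R = (r_1, r_2, r_3, r_4) \in [4p]^4$ of sinks; the new terminal pairs are the $4$ ``long'' pairs routed through each second-layer copy. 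The budget is set to $p_0 = \Theta(p^2 \log q)$ so that a cut can afford to ``kill'' one of the $p$ first-layer copies entirely ($p$ edges) for each of the $\Theta(p\log q)$ second-layer groups that it wants to separate, but not all of them. Completeness is immediate: if $p$ edges separate all $4$ pairs in $(G,\mathcal{T}_G)$, then deleting the corresponding $p$ edges inside \emph{one} first-layer copy kills every long path through it, and since the long paths are grouped by which first-layer copies they traverse, spending $p$ edges per group separates all $k_0$ pairs using $p_0 = (p_0/p)\cdot p$ edges total — after checking the counting works out with the $\Theta(\cdot)$ constants.

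For soundness, suppose $S(G, \mathcal{T}_G, p) < 4$. Fix an arbitrary cut $A'$ of size $p_0$ in $H$. The key observation, analogous to the ``configuration'' argument in Lemma~\ref{lem:layer}, is that $A'$ restricted to the first $p$ layer-copies induces a configuration $C = (C_1, \dots, C_4)$ recording which terminal indices are \emph{still connected} in each copy after deleting $A' \cap E(G_j)$; since each $G_j$ has only $4$ terminal pairs, there are at most $(2^4)^4$ relevant per-copy patterns times the choice of how the budget splits among copies, giving $|\mathcal{C}| \le 2^{\Oh(p)}$. For each configuration $C$ I define $f_C(R)$ to be the maximal fraction of the $4$ long pairs in the $R$-connected second-layer copy that $A'$ can separate, optimizing over how the remaining budget is spent inside that copy. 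The crucial inequality is: if $R$ hits the ``still-connected'' set in all $4$ coordinates (probability $\prod_j |C_j|/(4p) \ge (1/q)^4$ in the regime we care about, using the pigeonhole bound that $A'$ can fully disconnect at most $p_0/p = \Theta(p\log q)$ of the first-layer copies), then separating all $4$ long pairs would require cutting all $4$ pairs inside that second-layer copy with $\le p$ edges, contradicting $S(G,\mathcal{T}_G,p) < 4$; hence $f_C(R) \le 3/4$ in that event, so $\ex_R f_C(R) \le y - (1/4)q^{-4}$ where $y$ is the maximal per-copy separation fraction in the first layer.

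Then I would invoke Lemma~\ref{lem:hash} with $\mathcal{F} = \{f_C : C \in \mathcal{C}\}$, $\delta = \Theta(q^{-4})$, and $|\mathcal{F}| = 2^{\Oh(p)}$, which tells us that $M = \Oh(\delta^{-2} \log|\mathcal{F}|) = \Oh(q^{8} \cdot p)$ independent samples of $R$ — this is where the $\Theta(p\, q^2 \log q)$ bound on $k_0 = 4M$ comes from, after absorbing constants and re-balancing exponents — form a $\delta$-biased sampler family with probability at least $\tfrac12$. On that event, for \emph{every} configuration $C$ simultaneously, the fraction of second-layer groups that $A'$ separates is $\ex_{R \sim U(\mathcal{R}_H)} f_C(R) \le y - \Theta(q^{-4}) \le 1 - \Theta(q^{-4})$; iterating the budget-counting shows this forces the overall fraction below $1/q$ once the ratio $p_0/p$ is chosen as $\Theta(p\log q)$ (the $\log q$ factor being exactly what is needed to drive $1 - \Theta(q^{-4})$ raised to the relevant power below $1/q$ — note that, unlike the \so case, here we get the whole amplification from a \emph{single} layer because the budget constraint couples the groups multiplicatively). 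The edge-count bound $|E(H)| = |E(G)|\cdot p_0 + \Oh(k_0 p_0)$ follows by counting: $p_0$ (really $p_0/p \cdot p$, up to the $\Theta$) graph copies plus $\Oh(k_0)$ connecting edges per copy. Derandomization is identical to Lemma~\ref{thm:so-gap}: $|X|, |\mathcal{F}|, \delta^{-1}$ are all bounded by functions of $p$ and $q$, so we enumerate all $M$-tuples from $X = [4p]^4$ in time $f(p,q)$ and pick a good one.

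\textbf{The main obstacle} I anticipate is getting the constants in the budget $p_0$ and the bound $k_0$ to line up so that (i) completeness spends \emph{exactly} $p_0$ edges, (ii) the pigeonhole step genuinely guarantees a constant fraction of first-layer copies survive for any size-$p_0$ cut, and (iii) the single-layer amplification $1 - \Theta(q^{-4}) \to 1/q$ actually goes through with the claimed $\Theta(p^2\log q)$ budget rather than something larger — in particular I would need to be careful that a clever cut cannot ``partially'' disconnect many copies in a way that still blocks the long paths cheaply, which is precisely what the $4$-terminal structure and the configuration bound $|\mathcal{C}| \le 2^{\Oh(p)}$ are there to rule out.
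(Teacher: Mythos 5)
Your construction does not match the paper's and, as designed, the soundness claim fails. You transplant the two-layer, serial (\so-style) stacking, but for \mmax \dmc the composition has the wrong semantics: a ``long'' pair routed first through a first-layer copy and then through a second-layer copy is separated as soon as \emph{any one} segment along it is cut (an OR over segments), whereas gap amplification for the maximization problem needs the opposite -- separating a new pair must force the cut to block a channel in \emph{every} copy (an AND). Concretely, with only $p$ first-layer copies and budget $p_0 = \Theta(p^2\log q)$, consider a NO-instance whose minimum multicut separating all $4$ pairs has size exactly $p+1$ (soundness must hold for all NO-instances, and nothing rules this out): the adversary simply spends $p+1$ edges in each first-layer copy, at total cost $p(p+1) \ll p_0$, fully disconnects the first layer, and thereby separates \emph{all} $k_0$ long pairs without ever touching the second layer. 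Your pigeonhole step (``a constant fraction of first-layer copies survive'') is therefore false, because the number of first-layer copies ($p$) is far smaller than $p_0/(p+1)$. The quantitative part also does not close: a per-group bound of the form $\ex_R f_C(R) \le y - \Theta(q^{-4})$ is only an additive gain, which a single layer cannot turn into an overall separated fraction of $\frac1q$, and your own sampler size $M = \Oh(q^{8}p)$ is inconsistent with the claimed $k_0 = \Theta(p\,q^2\log q)$.

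The paper's construction is flatter and dodges all of this. It takes $M = 3(p+1)\log q$ copies of $(G,\mathcal{T}_G)$ side by side, and each new terminal pair $(s_R,t_R)$, indexed by a random tuple $R=(r_1,\dots,r_M)\in[4]^M$, gets private edges $(s_R, G_i\{s,r_i\})$ and $(G_i\{t,r_i\}, t_R)$ for every $i\in[M]$: the new pair is connected \emph{in parallel} through one random terminal pair of each copy, so separating it requires cutting the pair $r_i$ inside $G_i$ for all $i$. With budget $p_0 = 3p(p+1)\log q$, completeness costs exactly $p$ per copy; for soundness, any cut fully clears at most $p_0/(p+1)=3p\log q$ copies, so at least $3\log q$ copies keep some pair connected, and a random new pair is separated with probability at most $(3/4)^{3\log q}\le \frac{1}{2q}$ -- the whole gap to $\frac1q$ appears directly, with no per-layer increments. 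Lemma~\ref{lem:hash} is then applied with $\delta=\frac{1}{2q}$ and $|\mathcal{F}|\le 16^{M}=2^{\Oh(p\log q)}$, giving $k_0 = \Oh(\delta^{-2}\log|\mathcal{F}|) = \Theta(p\,q^2\log q)$ sampled tuples, which is exactly where the stated parameters come from. If you want to salvage your write-up, replace the serial layering by this parallel hookup and make the number of copies exceed $p_0/(p+1)$ by the required margin; the configuration/sampler machinery you invoke is then used essentially as you intended.
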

\begin{proof}
Consider $M = 3(p+1)\cdot \log q$ copies of $(G,\mathcal{T}_G)$,
denoted $(G_1,\mathcal{T}_1), \dots, (G_M,\mathcal{T}_M)$.
Let $\mathcal{R} =  [4]^M$ be the family of all M-tuples with values in $[4]$.
For a random sequence $R = (r_1, r_2, \dots r_{M}) \in  \mathcal{R}$,
we add a terminal pair $s_R, t_R$
and for each $i\in[M]$ we add directed edges $(s_R, G_i\{s,r_i\})$ and $(G_i\{t,r_i\}, t_R)$.
We repeat this subroutine $k_0 = \Theta(p\cdot q^{2}\log q)$ times and create that many terminals pairs as depicted in Figure~\ref{fig:dmc}.
We set the budget $p_0 = 3p(p+1)\cdot \log q$.

If $S(G,\mathcal{T}_G,p) = 4$,
then the budget suffices to separate all terminal pairs in all copies of~$(G,\mathcal{T}_G)$ (completeness).
Otherwise, one needs to remove at least $p+1$ edges from each copy of $(G,\mathcal{T}_G)$ to separate all 4 pairs so we can afford that in 
at most $3p\cdot\log q$ copies.
Therefore for any solution there
are at least $3\log q$ copies, where there is at least one terminal pair that is not separated.

Let $C_i \subseteq [4]$ represent information about the status of solution within $(G_i,\mathcal{T}_i)$: there is path from $G_i\{s, j\}$ to $G_i\{t, j\}$ only if $j \in C_i$.
A~tuple $C = (C_1, \dots,  C_M)$ is called a~configuration and we refer to the family of configurations induced by possible solutions as $\mathcal{C}$.
Clearly, $|\mathcal{C}| \le 16^M$.

Recall that each terminal pair can be represented by a tuple $R = (r_1, r_2, \dots r_{M}) \in  \mathcal{R}$ encoding through which terminal pair in $G_i$ a~path from $s_R$ to $t_R$ can go.
For a fixed configuration $C \in \mathcal{C}$, function $f_C: \mathcal{R} \rightarrow \{0,1\}$
is set to $1$ if the pair $s_R, t_R$ is separated, or equivalently: if for 
each $i \in [M]$ we have $r_i \not\in C_i$.
For $S(G,\mathcal{T}_G,p) < 4$ there are at least $3\log q$ copies of $G_i$ with $C_i \neq \emptyset$,
therefore $\ex_{R \sim U(\mathcal{R})} f_C(R) \le (\frac 3 4)^{3\log q} \le 2^{-\log (2q)} = \frac 1 {2q}$.

The size of $\mathcal{C}$ is at most $16^M = 2^{\Oh(p\log q)}$.
We apply Lemma~\ref{lem:hash} for $\mathcal{F} = \{f_C : C \in \mathcal{C}\}$ and $\delta =  \frac 1 {2q}$.
It follows that $\Oh( \delta^{-2}\log(| \mathcal{F}|)) = \Oh(p\cdot q^{2}\log q)$ random samples from $\mathcal{R}$ suffice to obtain a~rounding error of at most $ \frac 1 {2q}$ for all $C \in \mathcal{C}$ at once.
Therefore with probability at least $\frac 1 2$ we have constructed an~instance in which for any cutset of size $p_0$ (and thus for any configuration $C$) the fraction of separated terminal pairs is at most $\ex_{R \sim U(\mathcal{R})} f_C(R) + \frac 1 {2q} \le \frac 1 q$.
\end{proof}

\begin{figure}
\centering
\includegraphics[scale=0.75]{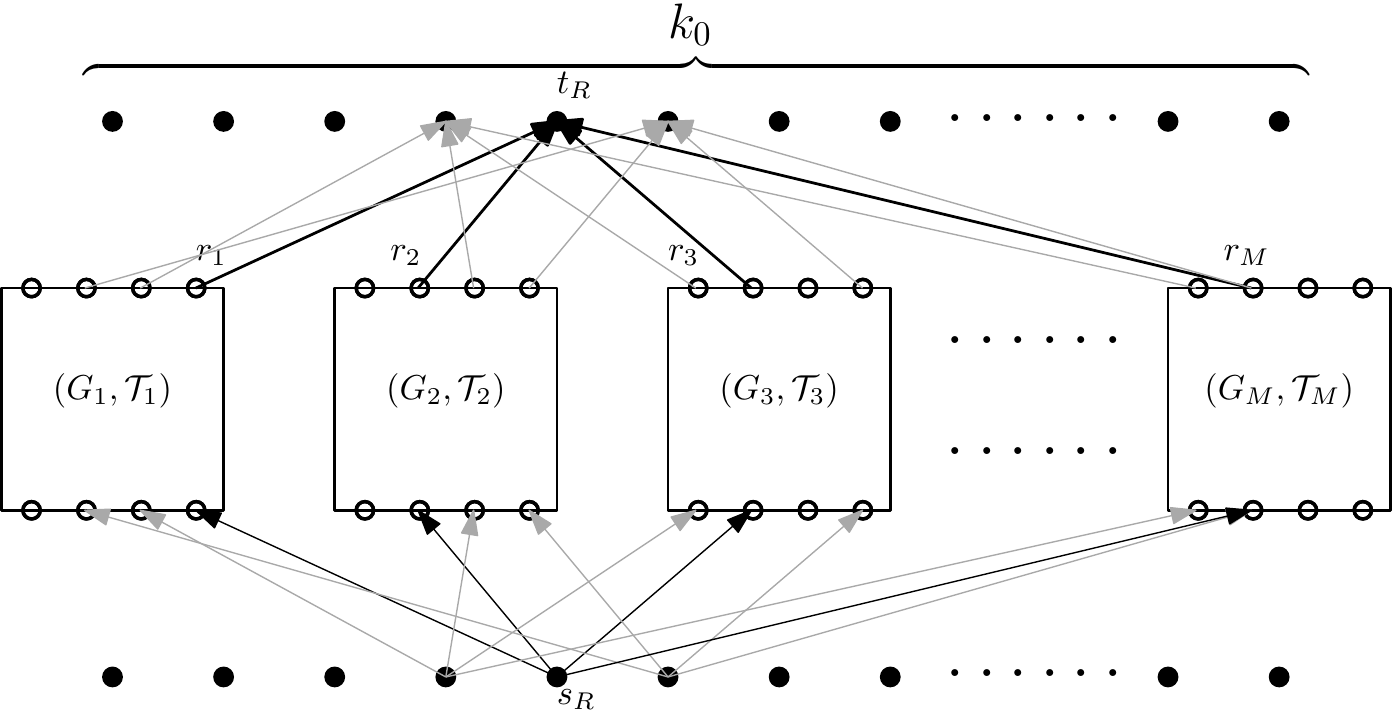}
\caption{ Building a new instance from $M$ parallel copies of $G$.
The tuple $R = (r_1, r_2, \dots r_{M})$ encodes through which nodes the $(s_R,t_R)$ pair is connected.
For the sake of legibility only the edges incident to $s_R, t_R$ (the black ones) and neighboring terminals (the grey ones) are sketched.
}\label{fig:dmc}
\end{figure}

\paragraph{Remark on derandomization} As before, if we allow exponential running time with respect to $p$ and $q$,
we can find a correct sampler family by enumeration and derandomize the reduction.
However, we cannot afford that in a polynomial-time reduction.
To circumvent this, observe that we upper bound the expected value of $f_C$ using independence of $3\log q$ variables.
We could alternatively take advantage of $\delta$-biased $\ell$-wise independent hashing~\cite{bias} (cf. \cite{hash, hash2, vazirani-phd}) to construct $\ell$-wise independent binary random variables with few random bits, instead of relying on Lemma~\ref{lem:hash}.
This technique provides an analogous bound on additive estimation error as in Lemma~\ref{lem:hash} for events that depend on at most $\ell$ variables.
A family of $N$ such variables can be constructed using $\Oh(\ell + \log\log N + \log(\frac 1 \delta))$ random bits~\cite[Lemma 4.2]{bias}.

Since we are interested in having $N = \Oh(p \cdot \log q)$ variables, $\delta = \frac{1}{2q}$, and $(3\log q)$-wise independency,
the size of the whole probabilistic space becomes $2^{\Oh(\log q + \log\log p)} = q^{\Oh(1)}(\log p)^{\Oh(1)}$.
The problem is that we need to optimize the exponent at $q$ in order to obtain better lower bounds.
Unfortunately we are not aware of any construction of a $\delta$-biased $\ell$-wise independent hash family, that would optimize this constant.

\maindmc*
\begin{proof}
Let us fix $\eps > 0$.
We are going to reduce the exact version of $p$-\dmc with 4 terminals, which is W[1]-hard, to the version with a~sufficiently large gap, parameterized by both $p$ and $k=|\mathcal{T}|$.
Let $L$ be an integer larger than~$\frac 2 \eps$.

For an instance $(G,\mathcal{T},p)$ of $p$-\dmc we apply Lemma~\ref{thm:dmc-gap} with $q = p^L$.
If~the original instance is fully solvable, the new one is as well.
Otherwise the maximal fraction of~separated terminal pairs is $\frac{k_0}{q} = \Oh(p\cdot q\log q) = \Oh(p^{L+2})$.
On the other hand, $\frac{k_0}{\alpha(k_0)} = \Omega\left(k_0^{(\frac 1 2 + \eps)}\right) \ge \Omega\left(p^{2L\cdot(\frac 1 2 + \eps)}\right)$.
The exponent at $p$ in the latter formula is $L + 2\eps L > L + 2$, so for large $p$ it holds $\frac{k_0}{\alpha(k_0)} \ge \frac{k_0}{q}$, therefore the reduction maps NO-instances into those where all cuts of size $p_0$ separate at most $\frac{k_0}{\alpha(k_0)}$ terminal pairs.
Both $k_0$~and~$p_0$ are functions of $p$,
therefore we have obtained a~parameterized reduction.
\end{proof}

\polydmc*
\begin{proof}
Suppose there is such an algorithm for some $\eps>0$ and proceed as in the proof of~Theorem~\ref{thm:dmc-w1} with
$L$ sufficiently large so that $2\eps L \ge 5$ and
 $q=m^L$.
The reduction is polynomial because $L$ is constant for fixed~$\eps$. 
We have $m_0 = |E(H)| = m\cdot p_0 + \Oh(k_0 \cdot p_0) = mp^2\log q + \Oh(p^3q^2\log^2 q) = \Oh(m^{2L+5})$ because 
$p \le m$.
If~the initial instance is fully satisfiable, then always $S(H,\mathcal{T}_H,p_0) = k_0$.
For a NO-instance, we have $S(H,\mathcal{T}_H,p_0) \le \frac {k_0}{q} = \Oh(m^{L+2})$ with probability at least $\frac 1 2$.
On the other hand, $k_0 = \Omega(m^{2L})$ and

$$\frac{k_0}{\alpha(m_0)} = \Omega\left(\frac{1}{m_0^{\frac 1 2 - \eps}}\right)\cdot k_0 = 
\Omega\left(m^{2L - (2L+5)\cdot (\frac 1 2 - \eps)}\right)  = \Omega(m^{L - \frac{5}{2}+ 2\eps L}).$$

We have adjusted $L$ to have $L - \frac{5}{2}+ 2\eps L > L + 2$, so
for large $m$ we get $\frac{k_0}{\alpha(m_0)} \ge \frac{k_0}{q}$.
Therefore the reduction maps NO-instances into those where all cuts of size $p_0$ separate at most $\frac{k_0}{\alpha(m_0)}$ terminal pairs.
When the reduction from Lemma~\ref{thm:dmc-gap} is correct (with probability at least  $\frac 1 2$), we are able to detect the NO-instances.
This implies that \dmc $\in$ co-RP.
\end{proof}

\section{W[1]-completeness of \so}
\label{sec:w1}

In this section we present a tight upper bound for the parameterized hardness level of $k$-\so, complementing the known $W[1]$-hardness.
We construct an FPT reduction to $k$-\textsc{Clique} and thus show that the problem belongs to $W[1]$.
The fact that $k$-\so is $W[1]$-complete implies the same for the gap version of the problem studied in the previous chapter,
what is uncommon is the theory of parameterized inapproximability.

The main idea in the reduction is to restrict to solutions
consisting of $f(k)$ subpaths, which can be chosen almost freely between fixed endpoints.
This formalizes an intuitive observation that different 
terminal pairs should obstruct each other only limited number of times,
because otherwise one path may exploit the other one as a~shortcut
and the whole knot of obstructions could be disentangled.


\begin{definition}[Canonical path family]
For a graph $G$, a family of paths $(P_{v,u})$, defined for all pairs $(v,u)$ such that $u$ is reachable from $v$ in $G$, is called canonical if it satisfies the following conditions:
\begin{enumerate}
    \item if there exists an edge $(u,v)$ in $G$, then $P_{u,v} = (u, v)$,
    \item if a vertex $w$ lies on a path $P_{u,v}$, then $P_{u,v}$ equals $P_{u,w}$ concatenated with $P_{w,v}$.
\end{enumerate}
\end{definition}

There might be multiple choices for such a family, but
for our purposes we just need to fix an arbitrary one.
It is important that we are able to construct it efficiently.

\begin{lemma}
\label{lem:canon-family}
A canonical path family can be constructed in polynomial time.
\end{lemma}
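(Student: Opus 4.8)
The plan is to produce one explicit canonical family, namely the one given by the ``shortest, then lexicographically smallest'' routing rule. Fix once and for all an arbitrary total order on $V(G)$, so that a path can be read as a sequence of vertices and two paths compared lexicographically. For every ordered pair $(v,u)$ with $u$ reachable from $v$, let $P_{v,u}$ be the $v$--$u$ path with the fewest edges, breaking ties by taking the lexicographically smallest vertex sequence (with $P_{v,v}$ the one-vertex path). This is well defined: there are finitely many simple $v$--$u$ paths, a shortest $v$--$u$ walk is automatically a simple path, and all shortest $v$--$u$ paths have the same number of vertices, so the lexicographic minimum among them is unique.

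Axiom~(1) is then immediate: if $(u,v)$ is an edge of $G$, then since $G$ is simple we have $d(u,v)=1$ and the only $u$--$v$ path with one edge is the edge itself, so $P_{u,v}=(u,v)$. For the polynomial-time construction I would, for each target $u$, run a BFS in the reverse graph to get $d(v,u)$ for all $v$, form the shortest-path DAG towards $u$ (keep edge $(a,b)$ iff $d(a,u)=d(b,u)+1$), and record for each vertex its smallest-indexed out-neighbour in this DAG; following these pointers from any $v$ reaches $u$ in exactly $d(v,u)$ steps and spells out $P_{v,u}$, since every DAG vertex still reaches $u$ so the greedy choice is never a dead end, and greedily picking the smallest next vertex yields precisely the lexicographically smallest shortest path. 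All of this is clearly polynomial.

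The substantive step is axiom~(2), which I would prove by a lexicographic exchange argument. Suppose a vertex $w$ lies on $P_{u,v}$ and split $P_{u,v}=\alpha\cdot\beta$ at $w$. Since $w$ lies on a shortest $u$--$v$ path we have $d(u,v)=d(u,w)+d(w,v)$, so $\alpha$ and $\beta$ are shortest $u$--$w$ and $w$--$v$ paths respectively. To see $\alpha=P_{u,w}$: otherwise $P_{u,w}<_{\mathrm{lex}}\alpha$, and the concatenation $P_{u,w}\cdot\beta$ has the same length $d(u,v)+1$ as $P_{u,v}$, hence is a simple path --- any repeated vertex would allow a shortcut to a $u$--$v$ walk of length below $d(u,v)$, which is impossible --- so it is another shortest $u$--$v$ path; comparing it with $P_{u,v}=\alpha\cdot\beta$, they agree up to the first position where $P_{u,w}$ and $\alpha$ differ, where $P_{u,w}$ is smaller, so $P_{u,w}\cdot\beta<_{\mathrm{lex}}P_{u,v}$, contradicting minimality. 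The symmetric argument, replacing the suffix $\beta$ by $P_{w,v}$, gives $\beta=P_{w,v}$, and hence $P_{u,v}=P_{u,w}\cdot P_{w,v}$. I expect the only delicate point to be the claim that the exchanged path stays simple, and that reduces to the one-line observation that a shortcut would beat the shortest-path distance; the rest is bookkeeping with the vertex order.
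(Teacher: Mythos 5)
Your proposal is correct and follows essentially the same route as the paper: define $P_{u,v}$ as the shortest $u$--$v$ path with lexicographic tie-breaking, build it greedily from all-pairs distances, and verify property (2) by the replace-the-prefix exchange argument. Your extra care about the simplicity of the exchanged path $P_{u,w}\cdot\beta$ is a detail the paper leaves implicit, but it is the same proof.
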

\begin{proof}
Let us fix any labeling of vertices with a linearly ordered alphabet.
We define $P_{u,v}$ to be the shortest path from $u$ to $v$, breaking ties lexicographically.
In order to construct it, we begin with computing distances between all pairs of vertices.
The first edge on the path $P_{u,v}$ goes to the lexicographically smallest vertex among those minimizing distance to $v$, and similarly for the further edges.

If there exists an edge $(u,v)$ in $G$, then it constitutes the only shortest path from $u$ to~$v$, therefore $P_{u,v} = (u, v)$.
To check property (2), suppose that $w$ lies on $P_{u,v}$ and let $P'$ be the subpath of $P_{u,v}$ from $u$ to $w$.
By definition, $P_{u,w}$ is the lexicographically smallest $(u,w)$-path among the shortest ones.
If $P' \ne P_{u,w}$, then we could replace $P'$ with $P_{u,w}$ and either shorten $P_{u,v}$ or decrease its position in the lexicographic order.
This is not possible, since $P_{u,v}$ is the lexicographically smallest $(u,v)$-path among the shortest ones.
Therefore $P' = P_{u,w}$ and analogous argument works for~$P_{w,v}$.
\end{proof}

\begin{definition}[Support]
Suppose a simple path $P = (v_1, \dots, v_m)$ can be represented as
a concatenation of canonical paths $P_{u_i, u_{i+1}}$ for some
sequence $v_1 = u_1, \dots, u_\ell = v_m$.
We will refer to the set $\{u_1, u_2, \dots, u_\ell\}$ as a support of $P$.
A path may have multiple supports.
If a path admits a~support of size at most $\ell$, then we say it is $\ell$-canonical. 
\end{definition}

{Thanks to property (1), the entire vertex set of a path forms a~support, so every path is $n$-canonical.
However, the same path might also admit a~significantly smaller support.
We are going to show that every solution can be transformed into one consisting of paths with moderate supports.
}


\begin{definition}[Schedule]
Consider a family of non-empty sets $A_1, A_2, \dots, A_k \subseteq [n]$.
For each $i \in [k]$ we consider variables $a^{start}_i = \min A_i$ and $a^{end}_i = \max A_i$.
The~schedule of this family is the order relation over those $2k$ variables.
\end{definition}

For example, if $A_1 = \{1,3,4\},\, A_2 = \{2,4\}$, then the schedule is given by relation $a^{start}_1 < a^{start}_2 < a^{end}_1 = a^{end}_2$.
It is easy to see that the number of possible schedules for $k$ sets is at most $(2k)^{2k}$: for each of the $2k$ variables we choose its position in the sequence, allowing different variables to share the same position.

We are ready to prove the main observation needed for the proof.
A cycle (resp. path) in a mixed graph is an~edge set that can be oriented to form a~directed cycle (resp. path).
A~mixed graph $G$ is acyclic, if any orientation $\og$ is an~acyclic directed graph.
In particular, any undirected connected component in an~acyclic mixed graph forms a~tree.
It is well known~\cite{xp} that $k$-\so on general mixed graphs reduces to the case of acyclic mixed graphs, since any cycle
can be contracted without changing the answer to the instance.
We say that two paths are in \emph{conflict} if there is an undirected edge they use in different directions.

\begin{lemma}\label{lem:canonical}
If a $k$-\so instance is satisfiable on an acyclic mixed graph, then it admits a solution in which each path is $k^{\Oh(k)}$-canonical.
\end{lemma}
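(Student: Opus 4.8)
The plan is to fix an arbitrary solution, choose it to be \emph{minimal} with respect to a carefully picked potential, and show that minimality alone forces each path to split into few canonical pieces, the count being governed by the number of possible schedules. Since $G$ is acyclic, each of its undirected components is a tree and contracting all of them yields a DAG; consequently no path can leave a tree component and return to it, so every path meets each tree component in a single subpath. Fix once and for all a canonical family $(P_{v,u})$ (Lemma~\ref{lem:canon-family}) and, among all solutions given by simple paths $(Q_1,\dots,Q_k)$, pick one minimizing $\sum_i |E(Q_i)|$ and, among those, lexicographically smallest when we compare the concatenation of the vertex-label sequences of $Q_1,\dots,Q_k$. Writing $Q_i=(v_1,\dots,v_m)$, let $u_1=v_1,u_2,\dots,u_{\ell_i}=v_m$ be its greedy canonical decomposition ($u_{j+1}$ is the last vertex for which $(u_j,\dots,u_{j+1})$ is still canonical); these $u_j$ form the minimal support, and the goal is to bound $\ell_i$ by $k^{\Oh(k)}$.

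The first step shows that in a minimal solution every breakpoint $u_j$ ($1<j<\ell_i$) is \emph{witnessed} by a conflict. Let $w$ be the successor of $u_j$ on $Q_i$. Since the decomposition could not be extended past $u_j$, the canonical detour $P_{u_{j-1},w}$ differs from the segment $(u_{j-1},\dots,u_j,w)$ of $Q_i$ and, by property~(2) of canonical families, avoids $u_j$. Suppose $P_{u_{j-1},w}$ conflicted with nothing --- neither with another path $Q_{j'}$ nor with $Q_i$ outside the replaced segment. Splicing $P_{u_{j-1},w}$ into $Q_i$ in place of $(u_{j-1},\dots,u_j,w)$ gives a walk from $s_i$ to $t_i$ whose edges lie inside $\bigcup_\ell E(Q_\ell)\cup E(P_{u_{j-1},w})$; extracting a simple $s_i$--$t_i$ subpath $Q_i'$ creates no conflict with any $Q_\ell$ (each of its edges was already conflict-free, and a simple path never conflicts with itself), so replacing $Q_i$ by $Q_i'$ yields another solution. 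As $P_{u_{j-1},w}$ is a shortest $u_{j-1}$--$w$ path, $Q_i'$ is no longer than $Q_i$; and if equally long then no shortening occurred, so $Q_i'$ is the spliced walk itself, whose vertex sequence is lexicographically smaller than that of $Q_i$ because $P_{u_{j-1},w}$ is the lex-least shortest $u_{j-1}$--$w$ path while $(u_{j-1},\dots,u_j,w)$ is not. Either way minimality is contradicted. Hence each breakpoint carries a witness: an index $\mathrm{w}(u_j)\in[k]$ (possibly $i$) and an undirected edge that $Q_{\mathrm{w}(u_j)}$ traverses in the direction opposite to $P_{u_{j-1},w}$.

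The second step bounds the number of breakpoints via a pigeonhole over schedules. For $j'\ne i$ let $A_{j'}\subseteq[m]$ be the set of positions of $Q_i$ lying on $Q_{j'}$; the schedule of $(A_{j'})_{j'\ne i}$ orders the $2k$ values $\min A_{j'},\max A_{j'}$, and there are at most $(2k)^{2k}$ of them. The idea is that if $Q_i$ had many more than $k^{\Oh(k)}$ breakpoints, two of them, $u_a$ and $u_b$ with $a<b$, would be ``equivalent'' --- the same set of other paths would meet $Q_i$ both before $u_a$ and after $u_b$ and none of them only strictly between --- and then between $u_a$ and $u_b$ the path $Q_i$ could be rerouted so as to tunnel through those common paths, exploiting the other path as a shortcut. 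Using that $Q_i$ and each $Q_{j'}$, being monotone in a topological order of the solution's orientation and meeting each tree component in a single subpath, interact in a constrained way, one argues that this reroute is again a valid solution of total length no larger than before whose $i$-th path has strictly fewer breakpoints, contradicting the lexicographic part of minimality. Iterating over all $i$ gives the claimed $k^{\Oh(k)}$ bound, hence a solution in which every path is $k^{\Oh(k)}$-canonical.

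The step I expect to be the main obstacle is precisely this schedule-counting step: one must pin down the right notion of ``equivalent breakpoints'' for which a common set of crossing paths genuinely supports a valid shortcut, verify that the rerouted path stays simple, connects $s_i$ to $t_i$, and introduces no conflict with the other paths or with its own untouched portion, and confirm that the potential (edge count, then lexicographic order, then total support size) strictly improves so the process terminates. All the delicate cases reduce to analysing how a rerouted fragment meets the edges of $Q_i$ and $Q_{j'}$ near the two splice points, where topological monotonicity and the single-subpath-per-tree-component property do most of the work.
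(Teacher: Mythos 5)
There is a genuine gap, and it sits exactly where you flag it: your ``Step 2'' is the heart of the paper's proof, and your proposal does not supply it. The paper's argument is not a per-path breakpoint analysis but a global one: it takes a solution with a \emph{minimum-size support} $V'$ (a vertex set containing a support of every solution path), pigeonholes the vertices of $V'$ by the set $R(u)$ of solution paths passing through $u$ (at most $2^k$ classes), orders the many vertices in one class along the paths of $R$, and looks at the canonical segments between consecutive such vertices. Minimality forces each such segment to conflict with some path outside $R$; a second pigeonhole over the conflict set $T$ and over the \emph{schedule} of the intersections with the paths of $T$ yields $2k$ segments sharing both. Only then comes the key construction: a detour that follows a segment to its first conflict with some $P_a$, rides $P_a$ forward (acyclicity guarantees forward) to its last intersection with the next segment, and repeats --- the common schedule guarantees each conflicting path is used at most once, so after at most $k-1$ hops the detour lands in the last segment. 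This removes $2k$ support vertices for \emph{all} paths of $R$ simultaneously while adding at most $2k-2$, contradicting minimality. Your sketch of ``two equivalent breakpoints'' between which one ``tunnels through'' common paths is a gesture at this construction, but you neither define the equivalence, nor do the counting that produces enough aligned segments (two is not enough: each hop may only advance one segment and you need one hop per conflicting path), nor verify that the tunnelling creates no new conflicts --- and you explicitly defer all of this as ``the main obstacle.''

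A second, concrete problem is your choice of potential. You minimize total edge length and then a lexicographic order, but the reroute that the argument needs uses subpaths of \emph{other} solution paths as shortcuts, and these can be arbitrarily long; the detour can therefore increase both the edge count and the lexicographic value, so your minimality would not be contradicted and the process need not terminate. The paper avoids this by measuring progress purely in support size: the detour is, by property (2) of canonical families, itself pieced together from canonical segments and existing solution paths, so its cost is counted in new support vertices (at most two per hop), not in length. Your Step 1 (every breakpoint of a greedy canonical decomposition is witnessed by a conflict) is a reasonable per-path analogue of the paper's ``otherwise remove $u_i$ from $V'$'' step, but on its own it does not feed into a workable second step, so as written the proposal does not establish the lemma.
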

\begin{proof}
A solution is a family of paths $P_1, \dots, P_j$ which are not in conflict and $P_j$ is a~ $(s_j,t_j)$-path.
We say that a set $V' \subseteq V$ is a support of the solution if
it contains supports for all $P_j$.
Due to property $(1)$ of Definition~\ref{lem:canon-family}, every path admits some, potentially large, support,
so we can always find such a set $V'$.
We are going to show that if an instance is solvable, then there exists
a solution with support of size
less than $\beta_k = 4k\cdot 4^k \cdot (2k)^{2k}$.
To do so, we assume the contrary: that the minimal size of a support is at least $\beta_k$, and then construct a solution with a smaller support.
This will entail the claim.

Consider a solution with support $V'$, $|V'| \ge \beta_k$.
For a vertex $u \in V'$, we define $R(u)$ to be the set
of indices $j$, such that $P_j$ goes though $u$.
By counting argument, there must be at least $4k\cdot 2^k \cdot (2k)^{2k}$ vertices in $u_i \in V'$ with the same non-empty set $R = R(u_i)$.

There is a natural linear ordering $u_1,u_2\dots,u_\ell$ of those vertices, that is coherent with the orientation of paths from the terminal set $R$. 
For each even index $i$, consider the canonical path from $u_{i-1}$ to $u_{i+1}$ and call it $Q_i$.
If such a path is not in conflict with paths from $[k] \setminus R$,
we could remove $u_i$ from $V'$ -- all paths from $R$ can have $u_i$ removed from their support, and other paths do not go through $u_i$.
This would contradict $V'$ being the minimum size support.

For the canonical path $Q_i$ 
we first consider which other paths are in conflict with $Q_i$: let us call this set $T(i) \subseteq [k]$.
Next, we look at the schedule of vertex sets $Q_i \cap P_j$ for $j \in T(i)$ with respect to the order given by $Q_i$.
Again by counting argument, we can pick $2k$ paths $Q_{j_1}, \dots, Q_{j_{2k}}$ with the same $T = T(j_i)$ and the same schedule
(factor 2 for choosing only even indices, at most $2^k$ choices of $T$, at most $(2k)^{2k}$ different schedules).
Recall that the first vertex of $Q_{j_{i+1}}$ is reachable from the last vertex of $Q_{j_i}$.

\begin{figure}
\centering
\includegraphics[scale=0.85]{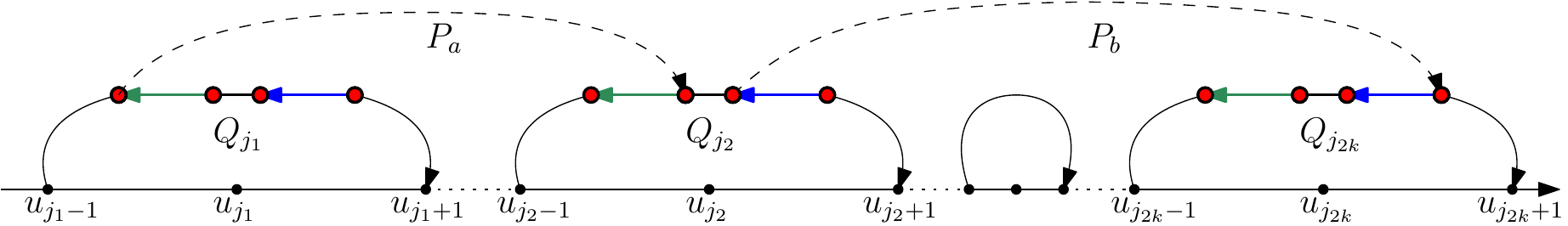}
\caption{ The construction of a detour. The red nodes are vertices
where the path switches between canonical $Q_{j_1}$ and subpaths of $P_a$ (the green one) or $P_b$ (the blue one).
They always appear in the same order because we have chosen $Q_{j_i}$ to share the same schedule.
Note that the last subpath goes directly to $Q_{j_{2k}}$, potentially omitting, e.g., $Q_{j_{2k - 1}}$.
}\label{fig:w1}
\end{figure}

Having the fixed schedule makes it possible to find a detour from $u_{j_{1}-1}$ (the beginning of $Q_{j_1}$) to $u_{j_{2k}+1}$ (the end of $Q_{j_{2k}}$) that omits the vertices $u_{j_i}$ in between.
Let us start by following $Q_{j_1}$ to the place of~the first conflict with some $P_a$.
We know that there is a path from the first vertex of $Q_{j_1} \cap P_a$ to the last vertex of $Q_{j_2} \cap P_a$, which is a subpath of $P_a$
($P_a$ cannot go in the other direction because we assumed the graph to be acyclic). 
We can follow this path and continue on $Q_{j_2}$ to the next occurrence of some conflict with $P_b$, then use the same argument to reach the last vertex of $Q_{j_3} \cap P_b$ and iterate this procedure.
Note that the vertex we 'arrive' at in $Q_{j_i}$ might be the same where the next detour starts.
It is crucial that we have fixed a single schedule so we will never reach another conflict with the same path $P_a$.
We need at most $|T| \le k-1$ iterations to get behind all conflicts -- then the last detour is guaranteed to reach a vertex in $Q_{j_{2k}}$ (potentially omitting several segments $Q_{j_i}$ with a~single subpath of $P_a$), from where we can reach the end of $Q_{j_{2k}}$, that is,  $u_{j_{2k}+1}$.
The idea of a detour is depicted
in Figure~\ref{fig:w1}.

We can now remove all the vertices $u_{j_1}, \dots, u_{j_{2k}}$ from $V'$
and replace the $(u_{j_1 - 1}, u_{j_{2k} + 1})$-subpath with the detour constructed above, for all terminal pairs in $R$.
Thanks to property $(2)$ of Definition~\ref{lem:canon-family},
we might need to add at most $2 k-2$ vertices to $V'$ (2 for each iteration) and the rest of the detour is either canonical or follows paths that are already in the solution.
Therefore we have constructed a~solution with  $V'$ smaller by at least $2k - (2k-2) = 2$, which finishes the proof.

\end{proof}

\mainhard*
\begin{proof}
The problem is known to be W[1]-hard~\cite{pilipczuk-multicut}, so we just need
 a parameterized reduction to $k$-\textsc{Clique}.
Let $(G,\mathcal{T})$ be an instance of $k$-\so and let $\beta_k = k^{\Oh(k)}$ be the sequence from Lemma~\ref{lem:canonical}.
As already mentioned, we can assume $G$ to be acyclic~\cite{xp}.
We construct the instance~$H$ of $(k\beta_k)$-\textsc{Clique} as follows:
\begin{enumerate}
\item Compute a canonical family of paths for $G$ in polynomial time (Lemma~\ref{lem:canon-family}).
\item For each pair $(i,j) \in [k] \times [\beta_k]$ create an independent set $H_{i,j}$.
The vertices in $H_{i,j}$ are given as ordered pairs $(u,v)$, such that $v$ is reachable from $u$ in $G$.
If $j=1$, then we require $u = s_i$ and if $j=\beta_k$, we require $v = t_i$.
We allow pairs of form $(u,u)$.
\item For vertices $(u_1, v_1)$ and $(u_2, v_2)$ lying in distinct sets $H_{i,j}$, place an edge between them if~the canonical paths $ P_{u_1, v_1}$ and $ P_{u_2, v_2}$ are not in conflict.
\item If we placed an edge between $(u_1, v_1) \in H_{i,j}$ and $(u_2, v_2) \in H_{i,j+1}$, remove it unless $v_1 = u_2$.
\end{enumerate}

The size of $V(H)$ is bounded by $k\beta_k\cdot |V(G)|^2$.
If the constructed graph $H$ admits a~clique of~size $k\beta_k$,
then each of its vertices must lie in a different independent set $H_{i,j}$.
Due to step (4), we know that canonical paths encoded by the choice of vertices in $H_{i,1}, H_{i,2}, \dots, H_{i,\beta_k}$ match and they form a path from $s_i$ to $t_i$.
Step (3) ensures that those paths are not in conflict, therefore the instance $(G,\mathcal{T})$ is satisfiable.

On the other hand, if $(G,\mathcal{T})$ admits a solution, we can assume its paths to be $\beta_k$-canonical due to Lemma~\ref{lem:canonical}.
We can thus choose the vertices in $H$ to reflect their supports,
padding them with trivial paths $(t_i,t_i)$ if the support is smaller than $\beta_k$.
Since none of the paths are in~conflict, there is an edge in $H$ between all chosen vertices.
\end{proof}

\section{Final remarks and open problems}

I would like to thank Pasin Manurangsi for helpful discussions and, in particular, for suggesting the argument based on Chernoff bound {in Lemma~\ref{thm:so-gap}},
which is surprisingly simple and powerful.
A question arises whether one can derandomize this argument efficiently and construct a $\delta$-biased sampler family in an~explicit way.
This would allow us to replace the assumption NP $\not\subseteq$ co-RP with P $\ne$ NP for \dmc.
This technique may also find use in other reductions in parameterized inapproximability.

An obvious question is if any of the studied problems admits an $o(k)$-approximation, or~if~the lower bounds can be strengthened.
Note that for the maximization version of \dmc we do not know anything better than $\frac k 2$-approximation as we cannot solve the exact problem for $k>2$.
For \so, the reason why the value of the parameter in the self-reduction becomes so large, is~that in each step we can add only an~exponentially small term to the gap.
Getting around this obstacle should lead to better lower bounds.
Also, the approximation status  for $k$-\so on planar graphs remains unclear~\cite{planar-hardness}.
Here we still cannot rule out a constant approximation and there are no upper bounds known.

Finally, it is an open quest to establish relations between other hard parameterized problems and their gap versions.
Is $F(k)$-\textsc{Gap $k$-Clique} W[1]-hard for $F(k) = o(k)$ and is $F(k)$-\textsc{Gap $k$-Dominating Set} W[2]-hard for any function $F$ (open questions in~\cite{dominating-set})?
Or can it be possible that $F(k)$-\textsc{Gap $k$-Dominating Set} is in W[1] for some function $F$?
\bibliographystyle{plain}
\bibliography{steiner}

\end{document}